\documentclass[hidelinks, 11 pt]{article}
\usepackage{amsfonts}
\usepackage{amsmath}
\usepackage{amssymb}
\usepackage{amsthm}
\usepackage{subfigure}
\usepackage[pdftex]{graphicx}
\usepackage{natbib}
\usepackage[hypertexnames=false]{hyperref}
\usepackage{setspace}
\usepackage{pdfsync}
\usepackage{lscape}
\usepackage[hmargin=1.25in,vmargin=1.25in]{geometry}

\hypersetup{colorlinks, linkcolor = {teal}, citecolor = {blue}, urlcolor ={blue!80!black}}

\makeatletter
\renewcommand\paragraph{\@startsection{paragraph}{4}{\z@}%
            {-2.5ex\@plus -1ex \@minus -.25ex}%
            {1.25ex \@plus .25ex}%
            {\normalfont\normalsize\bfseries}}
\makeatother
\setcounter{secnumdepth}{4} 
\setcounter{tocdepth}{4}    

\usepackage{mathrsfs,dsfont}
\usepackage[dvipsnames,svgnames, x11names]{xcolor}
\usepackage{pgfplots}
\pgfplotsset{compat=1.17} 
\usetikzlibrary{patterns}
\usepackage{caption}

\bibliographystyle{ims}

\DeclareGraphicsExtensions{.jpg}

\def\qed{\rule{2mm}{2mm}}

\parskip = 1.5ex plus 0.5 ex minus0.2 ex

\newtheorem{condition}{Condition}

\newtheorem{theorem}{Theorem}[section]
\newtheorem{lemma}{Lemma}[section]
\newtheorem{definition}{Definition}[section]

\newtheorem{corollary}{Corollary}[section]
\newtheorem{remark}{Remark}[section]

\onehalfspacing

\providecommand{\keywords}[1]{   
  \textbf{Keywords:} #1
}

\providecommand{\jel}[1]{  
  \textbf{JEL classification codes:} #1
}

\title{Approximate Sparsity Class and Minimax Estimation}
\author{Lucas Z. Zhang\\ Department of Economics\\University of California, Los Angeles\\\href{mailto:lucaszz@g.ucla.edu}{lucaszz@g.ucla.edu}}
\date{\today}

\begin{document}

\maketitle
\allowdisplaybreaks 

\begin{abstract}
Motivated by the orthogonal series density estimation in $L^2([0,1],\mu)$, in this project we consider a new class of functions that we call the approximate sparsity class. This new class is characterized by the rate of decay of the individual Fourier coefficients for a given orthonormal basis. We establish the $L^2([0,1],\mu)$ metric entropy of such class, with which we show the minimax rate of convergence. For the density subset in this class, we propose an adaptive density estimator based on a hard-thresholding procedure that achieves this minimax rate up to a $\log$ term.
\end{abstract}

\keywords{Metric entropy; Minimax estimation; Nonparametric estimation; LASSO}

\jel{C1; C13; C14}

\section{Introduction}

First introduced by \cite{cencov62,kromaltarter68,schwartz67,watson69}, density estimation using orthogonal series has since been extensively studied. Suppose we observe an i.i.d sample $\{X_i\}_{i=1}^n$ from the distribution of a random variable $X$ on $[0,1]$ with probability density $f_X$. Let $\{\phi_j\}_{j=1}^\infty$ be an orthonormal basis of $L^2([0,1],\mu)$. If $f_X\in L^2([0,1],\mu)$, then it enjoys an expansion $f_X(\cdot) = \sum_{j=1}^\infty\theta_j\phi_j(\cdot)$, where $\theta_j = E\left[\phi_j(X)\right]$ is the j-th Fourier coefficients for all $j\geq 1$. A natural estimator then takes the form
\begin{equation}
\hat{f}_J(\cdot) := \sum_{j=1}^J\hat\theta_j\phi_j(\cdot),\quad\hat\theta_j = \frac{1}{n}\sum_{i=1}^n\phi_j(X_i).
\end{equation}
When evaluating the performance of such estimator using MISE criteria, the cutoff value $J$ plays the role of a tuning parameter, which, when chosen properly, balances the variance and bias and hence minimizes MISE. The optimal choices of the cutoff $J$ have been discussed extensively in the literature, see, for example, \cite{hall87,hart85,kromaltarter76,watson69}. A generalized version of such estimator is based on thresholding
\begin{equation}\label{threshold}
\hat{f}(\cdot) := \sum_{j=1}^\infty \omega_j\hat\theta_j\phi_j(\cdot),
\end{equation}
where $\omega_j$'s are the so-called thresholding parameters and typically $\omega_j \in [0,1]$ plays a role of shrinking the estimated coefficients $\hat\theta_j$'s. Note that the previous estimator $\hat{f}_J(x)$ is a special case of the thresholding estimator (\ref{threshold}) when we use $\omega_j = \mathbf{1}\{j\leq J\}$. The thresholding estimator of the form (\ref{threshold}) was first considered by \cite{kromaltarter68}, and many thresholding procedures have since been extensively studied, see for example, \cite{bunea2010spades,chickencai05,digglehall86,donoho96,efro86,Wahba81} and the references therein.

Various previous works have shown that, when the thresholding parameters $\omega_j$'s are chosen properly, the orthogonal series estimators of the form (\ref{threshold}) can achieve minimax rates of convergence over familiar function classes such as Sobolev ellipsoids and Besov spaces (commonly discussed in the context of wavelets thresholding), see for example, \cite{bunea2010spades,chickencai05,donoho96,donoho98,efro86,hall86,hardle2012wavelets} and the references therein. We note that many function classes considered, such as the Sobolev ellipsoids, are characterized by the restrictions on the Fourier coefficients of the functions in those classes. Those restrictions on the Fourier coefficients in turn help researchers establish the statistical properties of estimators of the form (\ref{threshold}).  

The type of restriction of Fourier coefficients that we are interested in concerns how fast Fourier coefficients decay. It is motivated, for instance, by the discussion in \cite{efro2008,hall86} that for a twice differentiable density $f$ on $[0,1]$ with bounded second derivative, its Fourier expansion $f(\cdot) = \sum_{j=1}^\infty \theta_j\phi_j(\cdot)$ given the cosine orthonormal basis $\{\phi_j\}_{j=1}^\infty$ has the property that the Fourier coefficients $\{\theta_j\}$ decay at rate $j^{-2}$. Similar results concerning Hermite orthonormal basis were discussed in \cite{schwartz67}. In this paper, we generalize such restrictions. In particular, we allow for that: (i) the non-increasing reordering (in absolute value) of the Fourier coefficients satisfies that the j-th largest Fourier coefficient decays at rate $|\theta_{(j)}|\leq Aj^{-k}$ for some constants $A,k$; (ii) the tail sum of the Fourier coefficients satisfies $\sum_{j=J+1}^\infty \theta_j^2\leq CJ^{-2k+1}$ for some constant $C$. We call such class the ``approximate sparsity class'', motivated by the ``approximate sparsity condition'' from \cite{BCCHKN}. This class is interesting for the following reasons. First, it generalizes various previously considered classes of functions that are characterized by their Fourier coefficients, including but not limited to the Sobolev ellipsoids and H\"{o}lder classes (e.g. \cite{Katznelson04}). Second, in practice, researchers may be uncertain about the order (in terms of absolute magnitude) of the true coefficients, and our approximate sparsity class reflects such uncertainty while still maintaining the decaying properties of the re-ordered coefficients that are important for estimation purposes. 

While the approximate sparsity class may seem complex, we will show that such class is sandwiched in between two classes with simpler structures, and we will use sandwich arguments to formally establish the $L^2([0,1],\mu)$ $\epsilon$-metric entropy of the approximate sparsity class and of its density subclass. To establish the upper bound on the entropy, we use an existing result from the full approximation set (\cite{lorentz66}). On the other hand, we prove a lower bound using a volume-type argument inspired by \cite{Smolyak60}. With these entropy bounds, we can apply the results from \cite{yangbarron99} to establish the minimax rate of convergence (in terms of MISE) for both density estimation and nonparametric regression with Gaussian noise. We obtain a minimax rate of order $n^{-(2k-1)/(2k)}$. As mentioned before, one can verify that the Sobolev ellipsoids are subsets of our approximate sparsity class, and as expected, this rate on approximate sparsity class is slower than the well-known minimax rate $n^{-2k/(2k+1)}$ for Sobolev ellipsoids. For a comprehensive review on the minimax estimation and the connections between metric entropy and minimax rates, see for example, \cite{tsybakov2008intro,yangbarron97,yangbarron99} and the references therein. 

With the obtained minimax rate in mind, we propose an adaptive density estimator based on a data-driven hard thresholding procedure. The main idea is as follows. We first pick a large cutoff $J$, potentially much larger than sample size $n$, and estimate the first $J$ Fourier coefficients by the sample mean $\hat\theta_j = n^{-1}\sum_{i=1}^n\phi_j(X_i)$. However, including all the $J$ terms in the series will inevitably lead to large variance in estimation and hence suboptimal rate. To overcome this issue, in the second step we use a hard-thresholding procedure to select all the $\hat\theta_j$ above a certain data-driven threshold $\lambda$ and penalize the rest to zero. Then we form our estimator 
\begin{equation}
\tilde{f}_J(\cdot) = \sum_{j\leq J, |\hat\theta_j|\geq \lambda}\hat\theta_j\phi_j(\cdot).
\end{equation}
In the final step, we project $\tilde{f}_J$ onto the space of densities to obtain a bona-fide density. 

Note that since $\{\phi_j\}_{j=1}^\infty$ is an orthonormal basis, the estimation errors will be characterized entirely by Fourier coefficients $\theta_j =E\left[\phi_j(X_i)\right]$ and estimated $\hat\theta_j = n^{-1}\sum_{i=1}^n \phi_j(X_i)$. This suggests that essentially we are dealing with the problem of estimating and selecting many approximate means. Following the ideas from \cite{BCCHKN}, we pick the threshold $\lambda$ to be greater than the $(1-\alpha)$-quantile of $\max_{1\leq j\leq J}|\hat\theta_j-\theta_j|$, with $\alpha\downarrow 0$ as $n\to\infty$. Previous literature suggests that such $\lambda$ can be approximated in a data-driven way using results from self-normalized moderate deviation theories or high-dimensional bootstrap, see, for example, \cite{BCCH,BCCHKN} and the references therein. Although these constructions of $\lambda$ can be used to show the non-asymptotic rate, unfortunately, they are not sufficient for establishing the rate in terms of MISE. Instead, we modify their constructions and propose an alternative data-driven $\lambda$, and we show that such $\lambda$ has the desired property with the help of the Talagrand's inequality (see e.g. \cite{Bousquet2003Talagrand}). We then show that if the true density $f$ belongs to an approximate sparsity class, our proposed hard-thresholding estimator is nearly minimax up to a log factor. Moreover, the estimator itself does not depend on the assumptions on the parameters of the sparsity class and is therefore adaptive.

The remaining of the paper is organized as follows. In the next section, we introduce notations and formally define the approximate sparsity class. In Section 3 we establish the metric entropy and minimax rates for density estimation and nonparametric regression with Gaussian noise in such classes. In Section 4 we elaborate on the aforementioned adaptive density estimator and we derive its rate of convergence. We then provide a specific example using the cosine basis for twice differentiable densities, in which we verify the assumptions and establish the rate of convergence by applying the results from the main theorem. We conduct simulation studies in Section 5 to illustrate the performance of our estimator, and we conclude in Section 6. The proofs are deferred to the appendix.

\section{Approximate Sparsity Class}
Suppose $\Phi:= \{\phi_j\}_{j=1}^\infty$ is an orthonormal basis of $L^2(\mathcal{X},\mu)$ for $\mathcal{X}\subset \mathbf{R}$, and without loss of generality, let $\mathcal{X} = [0,1]$. Then, for any $i\neq j$,
\begin{equation}
\int_0^1 \phi_j^2(x) d\mu(x) = 1,\quad\int_0^1 \phi_i(x)\phi_j(x) d\mu(x)  = 0
\end{equation}
Moreover, for any $f\in L^2([0,1],\mu)$, there is a representation
\begin{equation}
f(\cdot) = \sum_{j=1}^\infty\theta_j\phi_j(\cdot)\quad\text{with}\quad\sum_{j=1}^\infty\theta_j^2 <\infty.
\end{equation}
Here $\mu$ can be either the Lebesgue measure in the context of density estimation or known probability measures on $[0,1]$ in the regression settings. 

Restrictions on the Fourier coefficients $\{\theta_j\}_{j=1}^\infty$ lead to several familiar classes such as the \textit{Sobolev ellipsoids} (e.g. Chapter 1.7.1 in \cite{tsybakov2008intro}) and \textit{full approximation set} (e.g. \cite{lorentz66}). Motivated by recent literature on high dimensional models (e.g. for a comprehensive review, see the handbook chapter by \cite{BCCHKN}), we introduce a new set of restrictions on the Fourier coefficients, and we call the resulting function class the \textit{approximate sparsity class}:
\begin{definition}\label{def:as}
For given constants $A>0, k>1/2$ and $C>0$, and for a given orthonormal basis $\Phi = \{\phi_j\}_{j=1}^\infty$, the approximate sparsity class is defined as
\begin{equation}
\begin{split}
\Theta_k(\Phi,A,C) := &\left\{f\in L^2([0,1],\mu): f(\cdot)=\sum_{j=1}^\infty\theta_j\phi_j(\cdot);\right.\\ &\left.\quad\text{the non-increasing re-ordering }\{\theta_{(j)}\}_{j=1}^\infty ~\text{satisfies}~|\theta_{(j)}|\leq Aj^{-k};\right.\\
&\left.\quad\forall J\geq 1,~\text{the tail sum satisfies}~\sum_{j=J+1}^\infty \theta_j^2\leq CJ^{-2k+1} \right\}.
\end{split}
\end{equation}
\end{definition}
First, we note that the re-ordering in the definition refers to the re-ordering of the coefficients by the magnitude of their absolute values. Specifically, after re-ordering, $\theta_{(j)}$ will be the j-th largest element in $\{\theta_j\}_{j=1}^\infty$ by absolute magnitude. We also want to remark that the re-ordering requirement in the definition is equivalent to that $\forall J\geq 1$, the non-increasing re-ordering of $\{\theta_{(j)}\}_{j=1}^J$ satisfies $|\theta_{(j)}|\leq Aj^{-k}$, which is a convenient characterization that we will use to establish various results in this paper. As we discussed in the introduction, it can be shown that the rate of decay of individual series coefficient is closely related to the smoothness of functions. The re-ordering condition relaxes such restriction by imposing less a priori knowledge on which series coefficients are important (as measured by magnitude).

Moreover, the restrictions on the tail sum is a natural one. In particular, if the Fourier coefficients decay at $Aj^{-k}$ without the re-ordering, the tail sum $\sum_{j=J+1}^\infty\theta_j^2$ can be shown to be bounded by $CJ^{-2k+1}$ using integral bound. However, by allowing the re-ordering of the Fourier coefficients, the tail-sum restriction is a necessary one. In particular, the tail-sum restriction imposes structures on the re-ordering, which also helps us bound the estimation bias.

At first sight, the complexity of the approximate sparsity space seems difficult to characterize. We will introduce additional spaces that have simpler structures and ``sandwich'' the approximate sparsity space.
\begin{definition}\label{def:sandwich}
For given constants $A>0,k>1/2$, and $C>0$, and for a given orthonormal basis $\Phi = \{\phi_j\}_{j=1}^\infty$, define the following spaces
\begin{equation}
\mathcal{E}_k(\Phi,A) := \left\{f\in L^2([0,1],\mu):\ f(\cdot)=\sum_{j=1}^\infty\theta_j\phi_j(\cdot);\ |\theta_j|\leq Aj^{-k}~\forall j\geq 1\right\}
\end{equation}
\begin{equation}
\begin{split}
\mathcal{A}_k(\Phi,A,C) :=& \left\{f\in L^2([0,1],\mu):f(\cdot)=\sum_{j=1}^\infty\theta_j\phi_j(\cdot);\right.\\&\quad\left.|\theta_1|\leq A;\ \forall J\geq 1,\sum_{j=J+1}^\infty \theta_j^2\leq CJ^{-2k+1}\right\}.
\end{split}
\end{equation}

\end{definition}

The class $\mathcal{E}_k(\Phi,A)$ consists of all functions in $L^2([0,1],\mu)$ whose Fourier coefficients decay in an ordered manner in polynomial rates. For example, differentiable functions in $L^2([0,1],\mu)$ can be viewed as elements in such class. On the other hand, the set $\mathcal{A}_k(\Phi,A,C)$ is an example of the full-approximation set discussed in \cite{lorentz66}. A function $f\in \mathcal{A}_k(\Phi,A,C)$ can be approximated by the partial sums $\sum_{j=1}^J\theta_j\phi_j(\cdot)$, and the tail sum restriction in the definition of $\mathcal{A}_k(\Phi,A,C)$ can be understood as the restriction on the bias from such approximation. We show in the appendix (Lemma \ref{lm:inclusion}) that for appropriately chosen constant $C$, $\mathcal{E}_k(\Phi,A)\subseteq\Theta_k(\Phi,A,C)\subseteq\mathcal{A}_k(\Phi,A,C)$. In particular, note that if $|\theta_j|<Aj^{-k}$, then
\begin{equation}
\sum_{j= J+1}^\infty\theta_j^2 \leq A^2\int_J^\infty t^{-2k}dt = \frac{A^2}{2k-1}J^{-2k+1}
\end{equation}
so we can simply take $C = A^2/(2k-1)$ in the definition of $\mathcal{A}_k(\Phi,A,C)$. The structures of $\mathcal{E}_k(\Phi,A)$ and $\mathcal{A}_k(\Phi,A,C)$ are much simpler and they will help us bound the metric entropy of the approximate sparsity class.

Throughout, we will use $M_2(\epsilon,\mathcal{F}):= \log N(\epsilon,\|\cdot\|_{L^2([0,1],\mu)},\mathcal{F})$ to denote the Kolmogorov $\epsilon$-entropy, where $N(\epsilon,\|\cdot\|_{L^2([0,1],\mu)}, \mathcal{F})$ is the cardinality of the largest $\epsilon$-packing set of a set $\mathcal{F}$ under the $L^2([0,1],\mu)$ distance. Next, we formally introduce the definition of minimax rates, borrowing notation from \cite{tsybakov2008intro}. Let $\{a_n\}$ and $\{b_n>0\}$ be real sequences. We write $a_n\gtrsim b_n$ if
$\liminf_{n\to \infty} a_n/b_n >0$, and similarly, we write $a_n\lesssim b_n$ if $\limsup_{n\to \infty} a_n/b_n <\infty$. We use the notation ``$\asymp$'' as the asymptotic order symbol. That is, we write $a_n\asymp b_n$ if 
\begin{equation}
0<\liminf_{n\to \infty} \frac{a_n}{b_n} \leq \limsup_{n\to \infty} \frac{a_n}{b_n} <\infty.
\end{equation}

\begin{definition}\label{def:minimax}
Given a set $\mathcal{F}\subseteq L^2([0,1],\mu)$ equipped with norm $\|\cdot\|_{L^2([0,1],\mu)}$, we say $\psi_n$ is a minimax optimal rate of convergence on $(\mathcal{F}, \|\cdot\|_{L^2([0,1],\mu)})$ if
\begin{equation}
\inf_{\hat{f}_n}\sup_{f\in\mathcal{F}}E_f\left[\|f-\hat{f}_n\|_{L^2([0,1],\mu)}^2\right]\asymp \psi_n^2
\end{equation}
where the infimum is taken over all possible estimators.
\end{definition}

\section{Entropy and Minimax Rate for Approximate Sparsity Class}

As we have discussed in the previous section, for appropriately chosen constant $C$ in the definitions of $\Theta_k(\Phi,A,C)$ and $\mathcal{A}_k(\Phi,A,C)$, we have $\mathcal{E}_k(\Phi,A)\subseteq\Theta_k(\Phi,A,C)\subseteq\mathcal{A}_k(\Phi,A,C)$. Therefore, understanding the metric entropy of $\mathcal{E}_k(\Phi,A)$ and $\mathcal{A}_k(\Phi,A,C)$ can help us control the metric entropy of the approximate sparsity class $\Theta_k(\Phi,A,C)$. In particular, the entropy on $\mathcal{E}_k(\Phi,A)$ and $\mathcal{A}_k(\Phi,A,C)$ will respectively give lower and upper bounds on the entropy of the sandwiched set $\Theta_k(\Phi,A,C)$, as shown in the next lemma.

\begin{lemma}\label{lm:akek}
The $\epsilon$-entropy of $\mathcal{E}_k(\Phi,A))$ under the $L^2([0,1],\mu)$ distance satisfies
\begin{equation}
M_2(\epsilon,\mathcal{E}_k(\Phi,A)) \gtrsim \epsilon^{-2/(2k-1)}.
\end{equation}
Moreover, $\epsilon$-entropy of $\mathcal{A}_k(\Phi,A,C))$ satisfies
\begin{equation}
M_2(\epsilon,\mathcal{A}_k(\Phi,A,C)) \asymp \epsilon^{-2/(2k-1)}.
\end{equation}
\end{lemma}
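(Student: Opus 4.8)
The plan is to establish the two bounds by exploiting the fact that both $\mathcal{E}_k(\Phi,A)$ and $\mathcal{A}_k(\Phi,A,C)$ are, via the orthonormal basis $\Phi$, isometric to explicit subsets of the sequence space $\ell^2$; concretely, $f\mapsto(\theta_j)_{j\ge1}$ sends $\mathcal{E}_k$ to the hyperrectangle $\prod_j[-Aj^{-k},Aj^{-k}]$ and sends $\mathcal{A}_k$ to the ``full approximation set'' of sequences with $\sum_{j>J}\theta_j^2\le CJ^{-2k+1}$ for all $J$ and $|\theta_1|\le A$. Since the $L^2([0,1],\mu)$ norm equals the $\ell^2$ norm of the coefficient sequence, $M_2$ is exactly the metric entropy of these sequence sets. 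I would handle the upper bound for $\mathcal{A}_k$ first, then the lower bound for $\mathcal{E}_k$, and finally note that the lower bound for $\mathcal{A}_k$ is immediate from an inclusion or from an analogous hyperrectangle sitting inside $\mathcal{A}_k$, so that the two bounds match and give ``$\asymp$''.

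For the upper bound $M_2(\epsilon,\mathcal{A}_k)\lesssim\epsilon^{-2/(2k-1)}$, I would invoke the known entropy estimates for full approximation sets from \cite{lorentz66}: choosing a truncation level $J\asymp\epsilon^{-2/(2k-1)}$ makes the tail $\sum_{j>J}\theta_j^2\le CJ^{-2k+1}\lesssim\epsilon^2$, so it suffices to cover the projection onto the first $J$ coordinates, which lives in a $J$-dimensional Euclidean ball of radius $O(1)$; a standard volumetric covering of that ball at scale $\asymp\epsilon$ costs $\log N\lesssim J\log(1/\epsilon)$ — but Lorentz's refined argument for approximation sets removes the spurious $\log$ factor and yields $\log N\lesssim J\asymp\epsilon^{-2/(2k-1)}$. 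I would cite Lorentz's theorem directly rather than reproving it; if a self-contained argument is wanted, one partitions the coordinates into dyadic blocks $j\in(2^m,2^{m+1}]$, discretizes block $m$ on a grid whose mesh shrinks geometrically in $m$ (allocating an $\epsilon$-budget $\epsilon_m^2\asymp 2^{-\beta m}\epsilon^2$ across blocks so that $\sum_m\epsilon_m^2\lesssim\epsilon^2$), and sums the resulting bit-counts $\sum_m 2^m\log(\text{grid size in block }m)$, which telescopes to $O(\epsilon^{-2/(2k-1)})$ because the block sizes grow geometrically while the per-coordinate bit cost stays bounded.

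For the lower bound $M_2(\epsilon,\mathcal{E}_k)\gtrsim\epsilon^{-2/(2k-1)}$, I would exhibit a large packing set inside the hyperrectangle. Fix $J\asymp\epsilon^{-2/(2k-1)}$ (up to a small constant) and restrict attention to coordinates $j\in(J/2,J]$; on each such coordinate the allowed range $[-Aj^{-k},Aj^{-k}]$ has half-width $\asymp AJ^{-k}$. Consider sequences supported on these $\asymp J$ coordinates with each entry equal to $\pm c J^{-k}$ for a suitable small constant $c$; there are $2^{\lceil J/2\rceil}$ of them, two such sequences differing in $h$ coordinates are at $\ell^2$-distance $2cJ^{-k}\sqrt h$. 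A Varshamov–Gilbert / Plotkin-type argument produces a subfamily of size $2^{cJ}$ any two of which differ in at least $J/8$ coordinates, hence are separated by $\gtrsim J^{-k}\sqrt J=J^{1/2-k}\asymp\epsilon$ in $\ell^2$; therefore $M_2(c'\epsilon,\mathcal{E}_k)\gtrsim J\asymp\epsilon^{-2/(2k-1)}$, and rescaling $\epsilon$ gives the claim. This is the volume-type / Gilbert–Varshamov argument alluded to in the introduction after \cite{Smolyak60}. The same packing family, after checking it also satisfies the tail-sum constraint with the appropriate $C$ (which it does, since it is finitely supported and the nonzero entries are $O(J^{-k})$ on coordinates $\le J$), sits inside $\mathcal{A}_k(\Phi,A,C)$ as well, delivering the matching lower bound for $\mathcal{A}_k$.

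The main obstacle is the upper bound: getting the sharp rate $\epsilon^{-2/(2k-1)}$ rather than $\epsilon^{-2/(2k-1)}\log(1/\epsilon)$ requires more than a crude coordinatewise discretization of the truncated ball, and hinges on correctly quoting or reconstructing the geometrically-decaying multi-scale allocation that underlies Lorentz's approximation-set entropy bound; the bookkeeping of the $\epsilon$-budget across dyadic blocks, and verifying that the total bit count indeed stays $O(\epsilon^{-2/(2k-1)})$, is the delicate point. The lower bound, by contrast, is routine once the right $J$ and the Gilbert–Varshamov construction are in place, and the verification of the inclusions $\mathcal{E}_k\subseteq\Theta_k\subseteq\mathcal{A}_k$ needed elsewhere is already supplied by Lemma \ref{lm:inclusion}.
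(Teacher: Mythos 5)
Your proposal is correct and follows the paper's overall strategy — pass to the coefficient sequence space by isometry, get the two-sided bound for $\mathcal{A}_k(\Phi,A,C)$ from Lorentz's Theorem~3 on full approximation sets (after checking its hypothesis, which the paper does via $\delta_{2J}\leq c\delta_J$), and prove the $\mathcal{E}_k(\Phi,A)$ lower bound on a $J$-dimensional truncation with $J\asymp\epsilon^{-2/(2k-1)}$ — but the finite-dimensional lower bound itself is executed by a genuinely different technique. The paper bounds the packing number of the hyperrectangle $H^Q_{k,A}=\prod_{j=1}^Q[-Aj^{-k},Aj^{-k}]$ from below by the volume ratio $V/v$ (volume of the rectangle over volume of an $\epsilon$-ball in $\mathbf{R}^Q$), evaluates $\log(V/v)$ with Stirling's formula, and optimizes $Q$; you instead build an explicit $\pm cJ^{-k}$ sign pattern on the coordinates $(J/2,J]$ and extract a $2^{cJ}$-separated subfamily via Varshamov--Gilbert. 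Both are valid and give the same rate: the volume argument is shorter once one accepts the packing-$\geq$-covering-$\geq$-volume-ratio chain and the Stirling bookkeeping, while your coding-theoretic construction is more hands-on, avoids Stirling entirely, and has the side benefit you note that the same finitely supported family visibly satisfies the tail-sum constraint and hence also certifies the lower bound for $\mathcal{A}_k$ directly (the paper gets that lower bound from Lorentz's theorem instead). Your diagnosis of the delicate point — that a crude coordinatewise discretization of the truncated ball loses a $\log(1/\epsilon)$ factor in the upper bound, which is exactly why one quotes Lorentz's multi-scale argument rather than reproving it — matches what the paper does.
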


The proof of this lemma is given in the appendix. As we remarked before, $\mathcal{A}_k(\Phi,A,C)$ is a special case of the full approximation set introduced in \cite{lorentz66}, and its entropy can be established using Theorem 3 in \cite{lorentz66}. On the other hand, to the best of our knowledge, the earliest reference of the class $\mathcal{E}_k(\Phi,A)$ can be traced back to \cite{Smolyak60} in which the trigonometric basis $\Phi$ is considered \footnote{The class $\mathcal{E}_k(\Phi,A)$ is sometimes referred to as the ``hyperrectangles'' in the literature, and minimax risks over these hyperrectangles has been studied in the context of Gaussian shift models, see, for example, \cite{donoho90} and the references therein.}. Inspired by \cite{Smolyak60}, we adapt their arguments and extend similar results to $\mathcal{E}_k(\Phi,A)$ for more general orthonormal bases \footnote{Prior to learning the existence of \cite{Smolyak60}, we pursued an alternative route when establishing the entropy of $\mathcal{E}_k(\Phi,A)$. This alternative proof relies on an isometry between $\mathcal{E}_k(\Phi,A)$ and a generalized Hilbert cube (\cite{kloeckner12}). \cite{kloeckner12} establishes the bounds on the entropy of the Hilbert cubes with which one can infer lower bounds on the entropy of $\mathcal{E}_k(\Phi,A)$. While these lower bounds are sufficient for establishing the minimax rates, we opt to adopt the arguments used by \cite{Smolyak60} for the sake of clarity.}. Next, we use Lemma \ref{lm:akek} to establish the following theorem, which can be shown using a sandwich type of argument.

\begin{theorem}\label{thm:entropy}
The $\epsilon$-entropy of $\Theta_k(\Phi,A,C)$ under the $L^2([0,1],\mu)$ distance satisfies
\begin{equation}
 M_2(\epsilon,\Theta_k(\Phi,A,C)) \asymp \epsilon^{-2/(2k-1)}.
\end{equation}
\end{theorem}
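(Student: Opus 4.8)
The plan is to obtain Theorem \ref{thm:entropy} directly from Lemma \ref{lm:akek} together with the chain of inclusions $\mathcal{E}_k(\Phi,A)\subseteq\Theta_k(\Phi,A,C)\subseteq\mathcal{A}_k(\Phi,A,C)$ (Lemma \ref{lm:inclusion}), exploiting the elementary fact that the metric entropy $M_2(\epsilon,\cdot)$ is monotone under set inclusion. Concretely, if $\mathcal{F}_1\subseteq\mathcal{F}_2$, then every $\epsilon$-packing set of $\mathcal{F}_1$ is also an $\epsilon$-packing set of $\mathcal{F}_2$, so $N(\epsilon,\|\cdot\|_{L^2([0,1],\mu)},\mathcal{F}_1)\le N(\epsilon,\|\cdot\|_{L^2([0,1],\mu)},\mathcal{F}_2)$, whence $M_2(\epsilon,\mathcal{F}_1)\le M_2(\epsilon,\mathcal{F}_2)$. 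Thus the two displays of Lemma \ref{lm:akek} will squeeze $M_2(\epsilon,\Theta_k(\Phi,A,C))$ from both sides.

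For the lower bound I would apply monotonicity to $\mathcal{E}_k(\Phi,A)\subseteq\Theta_k(\Phi,A,C)$ to get $M_2(\epsilon,\Theta_k(\Phi,A,C))\ge M_2(\epsilon,\mathcal{E}_k(\Phi,A))\gtrsim\epsilon^{-2/(2k-1)}$ by the first display of Lemma \ref{lm:akek}. If one wishes the statement to hold for an arbitrary $C>0$, rather than only for $C$ large enough to contain $\mathcal{E}_k(\Phi,A)$, it suffices to replace $A$ by a smaller constant $A'\le A$ with $(A')^2/(2k-1)\le C$; then $\mathcal{E}_k(\Phi,A')\subseteq\Theta_k(\Phi,A,C)$ and $M_2(\epsilon,\mathcal{E}_k(\Phi,A'))\gtrsim\epsilon^{-2/(2k-1)}$ still holds, with only the implied constant changing.

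For the upper bound, the inclusion $\Theta_k(\Phi,A,C)\subseteq\mathcal{A}_k(\Phi,A,C)$ is immediate from the definitions: for $f\in\Theta_k(\Phi,A,C)$ the re-ordering condition gives $|\theta_1|\le|\theta_{(1)}|\le A$, and the tail-sum condition $\sum_{j=J+1}^\infty\theta_j^2\le CJ^{-2k+1}$ is precisely the one in the definition of $\mathcal{A}_k(\Phi,A,C)$. Monotonicity then yields $M_2(\epsilon,\Theta_k(\Phi,A,C))\le M_2(\epsilon,\mathcal{A}_k(\Phi,A,C))\lesssim\epsilon^{-2/(2k-1)}$ by the second display of Lemma \ref{lm:akek}. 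Combining with the lower bound gives $M_2(\epsilon,\Theta_k(\Phi,A,C))\asymp\epsilon^{-2/(2k-1)}$.

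Since all the analytic content — the Smolyak-type volume argument behind the lower bound on $M_2(\epsilon,\mathcal{E}_k(\Phi,A))$ and the Lorentz full-approximation-set estimate behind $M_2(\epsilon,\mathcal{A}_k(\Phi,A,C))$ — is already packaged inside Lemma \ref{lm:akek}, there is essentially no remaining obstacle beyond bookkeeping. The only points that require care are (i) choosing the constants $A$ and $C$ consistently so that the inclusions of Lemma \ref{lm:inclusion} genuinely hold, and (ii) verifying that the implied constants in $\gtrsim$ and $\lesssim$ depend only on $(A,C,k,\Phi)$ and not on $\epsilon$, so that the sandwich produces a genuine two-sided bound of the same polynomial order in $\epsilon$.
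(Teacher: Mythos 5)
Your proposal is correct and is essentially identical to the paper's own proof: the paper likewise deduces Theorem \ref{thm:entropy} from the inclusions $\mathcal{E}_k(\Phi,A)\subseteq\Theta_k(\Phi,A,C)\subseteq\mathcal{A}_k(\Phi,A,C)$ of Lemma \ref{lm:inclusion} together with the entropy bounds of Lemma \ref{lm:akek}, using monotonicity of packing numbers under inclusion. Your additional remarks on shrinking $A$ to $A'$ when $C$ is small and on tracking the implied constants are sensible bookkeeping that the paper leaves implicit.
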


This theorem formally establishes bounds on the $L^2$ metric entropy of our approximate sparsity class $\Theta_k(\Phi,A,C))$with the help of $\mathcal{E}_k(\Phi,A)$ and $\mathcal{A}_k(\Phi,A,C)$. The following corollary establishes a similar entropy result on the density subset of $\Theta_k(\Phi,A,C))$.

\begin{corollary}\label{col:entropy}
Let $\tilde\Theta_k(\Phi,A,C)\subseteq \Theta_k(\Phi,A,C)$ be defined as the subset of all probability densities in $\Theta_k(\Phi,A,C)$. Moreover, assume that the basis $\Phi =\{\phi_j\}_{j=1}^\infty$ includes a constant term and $\Theta_k(\Phi,A,C)$ is uniformly bounded. Then
\begin{equation}
M_2(\epsilon,\tilde\Theta_k(\Phi,A,C)) \asymp \epsilon^{-2/(2k-1)}.
\end{equation}
\end{corollary}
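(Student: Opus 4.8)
The plan is to prove the two inequalities packaged in ``$\asymp$'' separately. The upper bound $M_2(\epsilon,\tilde\Theta_k(\Phi,A,C))\lesssim\epsilon^{-2/(2k-1)}$ is immediate: since $\tilde\Theta_k(\Phi,A,C)\subseteq\Theta_k(\Phi,A,C)$, any $\epsilon$-packing of the former is an $\epsilon$-packing of the latter, so $M_2(\epsilon,\tilde\Theta_k(\Phi,A,C))\le M_2(\epsilon,\Theta_k(\Phi,A,C))\asymp\epsilon^{-2/(2k-1)}$ by Theorem \ref{thm:entropy}. The real task is the matching lower bound, and the obstacle is that the separated family used to lower bound $M_2(\epsilon,\mathcal{E}_k(\Phi,A))$ in Lemma \ref{lm:akek} need not consist of probability densities. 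I would fix this by perturbing around the uniform density and rescaling.

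Relabel $\Phi$ so that $\phi_1\equiv 1$ on $[0,1]$ (legitimate because $\Phi$ contains a constant term and $\mu$ is a probability measure), and let $u\equiv 1$ denote the $\mu$-density of $\mu$. Every $\mu$-density $f$ satisfies $\theta_1(f)=\langle f,\phi_1\rangle=\int f\,d\mu=1$, so $\tilde\Theta_k(\Phi,A,C)\neq\emptyset$ forces $A\ge 1$; I assume $A\ge 1$, whence $u\in\tilde\Theta_k(\Phi,A,C)$. Let $B$ be the uniform bound on $\Theta_k(\Phi,A,C)$ and fix a constant $\delta\in(0,1)$ small enough that $\delta A<1$, $\delta\le 2^{-k}$, $\delta\le 1/B$, and $\delta^2 A^2\le C(2k-1)$. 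Running the argument behind Lemma \ref{lm:akek} on the subclass of $\mathcal{E}_k(\Phi,A)$ whose members have vanishing first coefficient — ignoring the single coordinate $j=1$ affects the log-cardinality by at most an absolute constant — yields an $(\epsilon/\delta)$-separated family $\{h_1,\dots,h_N\}\subseteq\mathcal{E}_k(\Phi,A)$ with $\theta_1(h_i)=0$ for all $i$ and $\log N\gtrsim(\epsilon/\delta)^{-2/(2k-1)}\asymp\epsilon^{-2/(2k-1)}$. By Lemma \ref{lm:inclusion}, $\mathcal{E}_k(\Phi,A)\subseteq\Theta_k(\Phi,A,C)$, so each $h_i$ is bounded by $B$ and (from $|\theta_j(h_i)|\le Aj^{-k}$) has re-ordered coefficients obeying $|\theta_{(r)}(h_i)|\le Ar^{-k}$.

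Set $f_i:=u+\delta h_i=\phi_1+\delta h_i$; I claim $\{f_1,\dots,f_N\}$ is an $\epsilon$-packing of $\tilde\Theta_k(\Phi,A,C)$. Each $f_i$ is a probability density: $\int f_i\,d\mu=1+\delta\langle h_i,\phi_1\rangle=1$ since $\theta_1(h_i)=0$, and $f_i\ge 1-\delta\|h_i\|_\infty\ge 1-\delta B\ge 0$. Each $f_i$ lies in $\Theta_k(\Phi,A,C)$: its coefficient sequence is $(1,\delta\theta_2(h_i),\delta\theta_3(h_i),\dots)$; because $\delta A<1$ the largest modulus is the leading $1\le A$, while for $m\ge 2$ the $m$-th largest modulus equals $\delta$ times the $(m-1)$-th largest among $\{|\theta_j(h_i)|\}_{j\ge 2}$, hence is at most $\delta A(m-1)^{-k}\le Am^{-k}$ by $\delta\le 2^{-k}$; and $\sum_{j>J}\theta_j(f_i)^2=\delta^2\sum_{j>J}\theta_j(h_i)^2\le\delta^2\tfrac{A^2}{2k-1}J^{-2k+1}\le CJ^{-2k+1}$ for all $J\ge 1$. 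Finally $\|f_i-f_j\|_{L^2([0,1],\mu)}=\delta\|h_i-h_j\|_{L^2([0,1],\mu)}\ge\epsilon$. Thus $M_2(\epsilon,\tilde\Theta_k(\Phi,A,C))\ge\log N\gtrsim\epsilon^{-2/(2k-1)}$, and together with the upper bound this gives the stated $\asymp$.

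I expect the crux to be this last verification: keeping the perturbed functions inside $\Theta_k(\Phi,A,C)$ while forcing them to be genuine densities. Non-negativity is purchased by shrinking the perturbation by the fixed factor $\delta$, which is precisely where the uniform-boundedness hypothesis enters; and since every density is pinned to $\theta_1=1$, prepending this mandatory leading coefficient shifts the whole re-ordered tail by one index, so the decay constraint $|\theta_{(m)}|\le Am^{-k}$ has to be re-checked at index $m$ against the $h_i$-bound at $m-1$, which is what dictates $\delta\le 2^{-k}$. The remaining ingredients — the trivial upper bound, the normalization $\int f_i\,d\mu=1$, the tail-sum estimate, and the $O(1)$ cost of excluding coordinate $j=1$ from the construction of Lemma \ref{lm:akek} — are routine.
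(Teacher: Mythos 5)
Your proposal is correct and follows essentially the same route as the paper's proof: the lower bound is obtained by perturbing the constant density by a uniformly scaled copy of the separated family from $\mathcal{E}_k(\Phi,A)$ (with the first coordinate removed), using the uniform boundedness of $\Theta_k(\Phi,A,C)$ to preserve nonnegativity and an index-shift check for the re-ordered decay condition, while the upper bound comes from the inclusion into $\mathcal{A}_k(\Phi,A,C)$ via Theorem \ref{thm:entropy}. Your write-up is in fact somewhat more explicit than the paper's (which routes the same construction through the normalization $(f+M'+1)/(\int f\,d\mu+M'+1)$ and a chain of isometries $\mathcal{G}_1,\mathcal{G}_2,\mathcal{G}_3$), particularly in verifying membership in $\Theta_k(\Phi,A,C)$ itself with the explicit constraints on $\delta$.
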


We note that many familiar orthonormal bases on $L^2([0,1],\mu)$ contain a constant term. Moreover, the requirement that the functions in $\Theta_k(\Phi,A,C)$ are uniformly bounded, can be satisfied, for example, when the orthonormal basis $\Phi$ is uniformly bounded and $k>1$. Assuming uniform boundedness will allow us to find a set of the densities of the form $(f+M'+1)/(\int f d\mu + M'+1)$ for $f\in\mathcal{E}_k(\Phi,A)$ for some large constant $M'$, which provides a lower bound for the entropy of $\tilde{\Theta}_k(\Phi,A',C')$ for some constants $A'$ and $C'$, which in turn is a lower bound for $\tilde{\Theta}_k(\Phi,A,C)$. Then applying the sandwich argument again gives us the results of the corollary. With the entropy results, we formally establish the minimax rates for nonparametric regression with Gaussian noise and density estimation on the approximate sparsity classes.

\begin{theorem}\label{thm:minimax}
Let $\tilde\Theta_k(\Phi,A,C)\subseteq \Theta_k(\Phi,A,C)$ be the subset of all probability densities in $\Theta_k(\Phi,A,C)$ and assume $\Theta_k(\Phi,A,C)$ is uniformly bounded.
\begin{itemize}
    \item[(i)] Suppose $\{X_i,Y_i\}_{i=1}^n$ is i.i.d with $X_i\sim P_X$ that admits a density. For nonparametric regression with Gaussian noise model $Y = f(X)+ \epsilon$ with $\epsilon\sim N(0,\sigma^2)$ and $f\in\Theta_k(\Phi,A,C)$, the minimax rate of convergence satisfies
    \begin{equation}
    \inf_{\hat{f}_n}\sup_{f\in\Theta_k(\Phi,A,C)}E\left[\|\hat{f}_n-f\|^2_{L_2([0,1],P_X)}\right] \asymp n^{-\frac{2k-1}{2k}}.
    \end{equation}
    \item[(ii)] Suppose $\{X_i\}_{i=1}^n$ is i.i.d with $X_i\sim X$ for $X\in[0,1]\subseteq\mathbf{R}$ with density $f\in \tilde\Theta_k(\Phi,A,C)$, the minimax rate of convergence satisfies
    \begin{equation}
    \inf_{\hat{f}_n}\sup_{f\in\tilde\Theta_k(\Phi,A,C)}E_f\left[\|\hat{f}_n -f\|^2_{L_2([0,1],\mu)}\right] \asymp n^{-\frac{2k-1}{2k}}.
    \end{equation}
\end{itemize}

\end{theorem}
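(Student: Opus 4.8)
The plan is to obtain the upper bound by a direct truncated-series argument and the matching lower bound by feeding the entropy bounds of Theorem~\ref{thm:entropy} and Corollary~\ref{col:entropy} into the information-theoretic machinery of \cite{yangbarron99}. Write $\beta := 2/(2k-1)$, so that the relevant $L^2$ metric entropy is of order $\epsilon^{-\beta}$; the only arithmetic fact needed is $2/(\beta+2) = (2k-1)/(2k)$.

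\textbf{Upper bound.} In both (i) and (ii) take $J = \lceil n^{1/(2k)}\rceil$ and the estimator $\hat f_J = \sum_{j\le J}\hat\theta_j\phi_j$, with $\hat\theta_j = n^{-1}\sum_{i=1}^n Y_i\phi_j(X_i)$ in the regression case and $\hat\theta_j = n^{-1}\sum_{i=1}^n\phi_j(X_i)$ in the density case; in the density case also set $\tilde f_J = \Pi_{\mathcal D}\hat f_J$, the $L^2([0,1],\mu)$-projection onto the closed convex set $\mathcal D$ of probability densities. Since $\Phi$ is orthonormal in the $L^2$ space at hand (in (i) this is $L^2(P_X)$, so $\mu = P_X$), we have $E[\hat\theta_j] = \theta_j$, so by Parseval
\begin{equation}
E\big[\|\hat f_J - f\|^2\big] = \sum_{j\le J}\operatorname{Var}(\hat\theta_j) + \sum_{j>J}\theta_j^2 .
\end{equation}
Uniform boundedness of $\Theta_k(\Phi,A,C)$ (and, in (i), the noise variance $\sigma^2$) gives $\operatorname{Var}(\hat\theta_j)\le c/n$ uniformly in $j$ and in $f$, while the tail-sum condition in Definition~\ref{def:as} gives $\sum_{j>J}\theta_j^2\le CJ^{-2k+1}$; hence the risk is $\lesssim J/n + J^{-2k+1}\asymp n^{-(2k-1)/(2k)}$ uniformly over $f$. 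In (ii), $f\in\mathcal D$ together with non-expansiveness of $\Pi_{\mathcal D}$ gives $\|\tilde f_J - f\|\le\|\hat f_J - f\|$, so the bound persists.

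\textbf{Lower bound.} In (i) the Gaussian model satisfies $\operatorname{KL}(P_f^{\otimes n}\,\|\,P_g^{\otimes n}) = \tfrac{n}{2\sigma^2}\|f-g\|_{L^2(P_X)}^2$, so the hypotheses of \citeposs{yangbarron99} lower bound hold with the $L^2(P_X)$ risk metric itself; combining the packing lower bound $\exp(c_1\epsilon^{-\beta})$ of Theorem~\ref{thm:entropy} with a KL-covering at scale $d_n\asymp n^{-1/(\beta+2)}$ (of finite cardinality by the matching entropy upper bound) and balancing in the Fano inequality yields $\inf_{\hat f}\sup_f E\|\hat f-f\|^2\gtrsim n^{-2/(\beta+2)} = n^{-(2k-1)/(2k)}$. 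Case (ii) is identical once the KL and squared-Hellinger distances are controlled by $\|\cdot\|_{L^2}^2$, which holds on any subfamily of $\tilde\Theta_k(\Phi,A,C)$ bounded away from $0$; such a subfamily is precisely the one constructed in the proof of Corollary~\ref{col:entropy} (densities $(f+M'+1)/(\int f\,d\mu + M'+1)$, $f\in\mathcal E_k(\Phi,A)$), whose $L^2$-packing number is still $\exp(c_1\epsilon^{-\beta})$. Running the lower bound over this subfamily and using monotonicity of the minimax risk under inclusion transfers the rate to all of $\tilde\Theta_k(\Phi,A,C)$, matching the upper bound.

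\textbf{Main obstacle.} The subtle point is (ii): unlike the Gaussian model, the testing distances (KL, Hellinger) are not comparable to $\|\cdot\|_{L^2}$ on all of $\tilde\Theta_k(\Phi,A,C)$ because its densities may come arbitrarily close to $0$, so the entropy bound cannot be fed into \cite{yangbarron99} verbatim. The remedy is to run the reduction on the bounded-below subfamily already produced for Corollary~\ref{col:entropy} --- on which $\operatorname{KL}\lesssim\|\cdot\|_{L^2}^2$ and $d_H\asymp\|\cdot\|_{L^2}$ --- and then pass to $\tilde\Theta_k(\Phi,A,C)$ by inclusion; this is the one step where the hypotheses that $\Phi$ contains a constant and that $\Theta_k(\Phi,A,C)$ is uniformly bounded are genuinely used.
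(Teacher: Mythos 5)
Your proposal is correct, and it reaches the result by a route that differs from the paper's in one substantive respect. The paper obtains \emph{both} the upper and lower bounds from the metric entropy alone: for (i) it plugs Theorem~\ref{thm:entropy} into Theorems 9--10 of \cite{yangbarron97} and Theorem 6 of \cite{yangbarron99}, and for (ii) it sandwiches $\tilde\Theta_k(\Phi,A,C)$ between the convex density classes $\tilde{\mathcal{E}}_k(\Phi,A)$ and $\tilde{\mathcal{A}}_k(\Phi,A,C)$, taking the upper bound from Yang--Barron's Theorem 7 applied to the convex superset $\tilde{\mathcal{A}}_k(\Phi,A,C)$ and the lower bound from $\tilde{\mathcal{E}}_k(\Phi,A)$. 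Your lower bound is essentially the paper's: the same critical-radius balancing $M_2(\epsilon_n)=n\epsilon_n^2$, and the same resolution of the key obstacle in (ii) (densities not bounded away from zero, so KL and Hellinger are not controlled by $\|\cdot\|_{L^2}$ on the whole class) by running the argument on the bounded-below subfamily $(f+M'+1)/(\int f\,d\mu+M'+1)$ constructed in the proof of Corollary~\ref{col:entropy} and then using monotonicity of the minimax risk under inclusion. Your upper bound, however, is genuinely different: instead of invoking the entropy-based upper bound on the convex superset, you exhibit an explicit truncated-series (oracle) estimator with $J\asymp n^{1/(2k)}$ and bound variance plus bias directly from the tail-sum condition in Definition~\ref{def:as}, with the projection onto densities handled by non-expansiveness. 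This is more elementary and self-contained --- it needs neither the convexity of $\tilde{\mathcal{A}}_k(\Phi,A,C)$ nor the entropy upper bound at all --- at the cost of requiring the uniform-boundedness hypothesis to control $\operatorname{Var}(\hat\theta_j)$ and, in (i), the implicit identification $\mu=P_X$ so that $\Phi$ is orthonormal in $L^2(P_X)$ (the same identification the paper makes tacitly). Both arguments are valid; the paper's is shorter given that the entropy results are already established, while yours makes the achievability side constructive.
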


Theorem \ref{thm:minimax} can be seen as a direct consequence of Theorem \ref{thm:entropy} and Corollary \ref{col:entropy} due to the tight connection between the entropy and minimax rates established in \cite{yangbarron99} and the references therein. In particular, the minimax rate of estimation on a particular class of functions $\mathcal{F}$ is closely related to the ``critical separation'' $\epsilon_n$, which is determined via the identity $M_2(\epsilon_n,\mathcal{F}) = n\epsilon_n^2$. Then the results of the theorem should follow from our entropy results. We note that there are some caveats when applying \cite{yangbarron99} directly on the density subset $\tilde\Theta_k(\Phi,A,C)$ since it is not convex nor bounded away from zero. However, we show in the proof that this subset is sandwiched in between two convex sets of densities with which we can establish the desired result.

\section{Nearly Minimax Optimal Adaptive Density Estimator}

\subsection{Preliminaries}
In this section, we propose a density estimator that is adaptive and achieves the minimax rate of convergence on $\tilde\Theta_k(\Phi,A,C)$ up to a log term. Suppose $X$ is a random variable with probability density $f\in L^2([0,1],\mu)$ and let $\Phi = \{\phi_j\}_{j=1}^\infty$ be an orthonormal basis of $L^2([0,1],\mu)$ with $\mu$ being the Lebesgue measure. Then $f$ has a unique representation:
\[
f(\cdot) = \sum_{j=1}^\infty\theta_j\phi_j(\cdot).
\] 
Moreover, $f$ being a probability density and $\Phi = \{\phi_j\}_{j=1}^\infty$ being an orthonormal basis allow us to find expressions of $\theta_j$'s in terms of expectations:
\begin{equation}\label{series:id}
\theta_j = \int_{[0,1]}\phi_j(x)(\sum_{j=1}^\infty\theta_j\phi_j(x))d\mu(x) =\int_{[0,1]}\phi_j(x)f(x)d\mu(x) = E\left[\phi_j(X)\right].
\end{equation}
Given an i.i.d sample $\{X_i\}_{i=1}^n$ with $X_i\sim X$, the standard series estimator for the density $f$ builds on identity (\ref{series:id}) and takes the form
\[
\hat{f}_{J}(x) = \sum_{j=1}^J\hat\theta_j\phi_j(x)\quad\text{where}\quad\hat\theta_j = \frac{1}{n}\sum_{i=1}^n\phi_j(X_i)
\]
Note that the mean integrated squared error (MISE) of $\hat{f}_n$ equals
\begin{equation}\label{series:mise}
E\left[\|f-\hat{f}_{J}\|_{L^2([0,1],\mu)}\right] = \sum_{j=1}^J\frac{Var(\phi_j(X))}{n} + \sum_{j=J+1}^\infty\theta_j^2.
\end{equation}
The choice of the cutoff $J$ plays an essential role in the trade-off between variance and bias in (\ref{series:mise}), and yet in order to properly choose $J$ in estimation, one often has to assume some prior knowledge about the smoothness of the unknown $f$.

We propose an alternative estimator that circumvents this issue. In particular, our estimator requires choosing a large cutoff $J$, and then uses LASSO to select the most ``relevant'' Fourier coefficients among $\{\hat\theta_1,\hat\theta_2,\cdots,\hat\theta_J\}$. Note that many of the familiar bases, such as cosine and Legendre polynomial basis, contain a constant element. Therefore, we always assume the orthonormal basis of choice contains a constant term, which, without loss of generality, we assume to be the first basis term $\phi_1$. Then note that $\hat\theta_1 = n^{-1}\sum_{i=1}^n\phi_1 = \phi_1 = E\left[\phi_1\right] = \theta_1$, that is, there's no estimation error for the first term. Therefore, we should always include $\hat\theta_1$ in our estimator in practice.

For a slight change of notation, let $\theta^J = \{\theta_1,\theta_2,\cdots,\theta_J\}\in\mathbf{R}^{J}$ denote the first $J$ true but unknown series coefficients. Let $\hat\theta^J\in\mathbf{R}^{J}$ be an estimator of $\theta^J$ defined as follows
\begin{equation}\label{theta:naive}
\hat\theta_j = \frac{1}{n}\sum_{i=1}^n\phi_j(X_i);\quad\hat\theta^J:=(\hat\theta_1,\cdots,\hat\theta_J)
\end{equation}
where each $\hat\theta_j$ is consistent by weak law of large numbers under mild regularity conditions, and $\hat\theta^J$ can be shown to be consistent using Bernstein's inequality and maximal inequality under additional assumptions (e.g. if the orthonormal basis $\Phi$ is bounded or if the density is bounded). Given $\hat\theta^J$, the estimator we consider here is the so-called \textit{hard-thresholding} estimator:
\begin{equation}\label{theta:hard}
\tilde\theta_j = \omega_j\hat\theta_j\quad\text{where}\quad \omega_j = \mathbf{1}\{|\hat\theta_j|\geq\lambda\} \quad\text{for}\ 1\leq j\leq J
\end{equation}
where $\omega_j$'s are the thresholding parameters depending on the penalty parameter $\lambda$. To be consistent with the notation in (\ref{threshold}), we let $\omega_j = 0$ for $j>J$. The hard thresholding estimator we consider here differs from the \textit{soft-thresholding} estimator that has been considered in some previous literature, for example, \cite{bunea2010spades} and the references therein. Intuitively, for a properly chosen penalty parameter $\lambda$, we penalize ``small'' estimates of the series coefficients to $0$ while keeping the rest unchanged. Let $T\subseteq \{1,\cdots,J\}$ denote the set of selected indices, that is,
\begin{equation}\label{def:T}
T := \{j\in\{1,\cdots, J\}: |\hat{\theta}_j|\geq\lambda\}.
\end{equation}
As a result, the new estimator we get is
\begin{equation}\label{series:post}
\tilde{f}_J(x) := \sum_{j=1}^J\tilde\theta_j\phi_j(x) = \sum_{j=1}^\infty\omega_j\hat\theta_j\phi_j(x) = \sum_{j\in T}\hat\theta_j\phi_j(x).
\end{equation}
As we will show later, the quality of this estimator depends on the penalty parameter and, to a lesser degree, the cutoff $J$.

Moreover, note that $\tilde{f}_J$ is not necessarily a probability density. We need to make sure $\tilde{f}_J$ integrates to $1$ and is nonnegative. The former is easily satisfied if $\mu$ is the Lebesgue measure on $[0,1]$ and the first basis element $\phi_1 = 1$, which, by the definition of orthonormality, implies
\begin{eqnarray}
\int_{[0,1]}\tilde{f}_J(x)d\mu(x) &=& \int_{[0,1]} \phi_1^2 d\mu(x) + \sum_{j\in T\setminus\{1\}}\hat\theta_j\phi_j(x) d\mu(x)\nonumber \\ &=& \int_{[0,1]} \phi_1^2 d\mu(x) + \frac{1}{\phi_1}\int_{[0,1]}\phi_1\sum_{j\in T\setminus\{1\}}\hat\theta_j\phi_j(x) d\mu(x) = 1.
\end{eqnarray}
This is another reason why we should always include $\hat\theta_1 = \phi_1$ in our estimator.

On the other hand, some forms of post-processing are needed to ensure that the estimator is nonnegative. We will follow the \textit{P-Algorithm} suggested by \cite{Gajek86}, which is attractive in both its ease of implementation and the statistical properties of the resulting estimator. Here we review the P-Algorithm and illustrate with our estimator:

\begin{definition}\label{p-alg}
The P-Algorithm is defined as follows:

\begin{itemize}
    \item[1.] Set $f_0 = \tilde{f}_J$, and $k=0$.
    \item[2.] Set $f_{k+1} = \max\{0,f_k\}$, and let $C_{k+1} = \int_{[0,1]}f_{k+1}(x)d\mu(x)$. Stop if $C_{k+1} = 1$.
    \item[3.] Set $f_{k+2} = f_{k+1} - (C_{k+1} -1)$.
    \item[4.] Set $k = k+2$ and go to step 2.
\end{itemize}

Denote the resulting estimator from the algorithm as $\hat{f}^*$.
\end{definition}

\cite{Gajek86} shows that the P-Algorithm converges both point-wise and in $\|\cdot\|_{L^2([0,1],\mu)}$ to a density $\hat{f}^* = \max\{0,\tilde{f}_J + c\}$, for some constant $c$. Moreover, $\hat{f}^*$ is the orthogonal projection of $\tilde{f}_J$ onto the space of densities, and $\hat{f}^*$ has at least the same rate of convergence (as measured in MISE) as $\tilde{f}_J$.

\subsection{Main results}

To estimate $\hat{f}^*$ and to establish its statistical properties, we need to consider a data-driven way of choosing the regularization parameter $\lambda$ in (\ref{theta:hard}). In particular, in order to control the penalization error in a uniform manner, we want to pick regularization parameter $\lambda$ in a way such that
\begin{equation}\label{lambda}
\lambda\quad\geq\quad (1-\alpha)-\text{quantile of}\ \|\hat\theta^J-\theta^J\|_{\infty}.
\end{equation}
There are several ways of achieving this. For example, one can approximate such $\lambda$ using the results from the self-normalized moderate deviation theory or high dimensional bootstrap literature, see for example \cite{BCCHKN} and the references therein. However, in order to establish the MISE, we need (\ref{lambda}) to hold with $\alpha = \alpha_n$ going zero sufficiently fast as sample size $n$ increases. To this end, we will modify the $\lambda$ proposed by \cite{BCCHKN} based on the moderation deviation theories, and instead we will use the Talagrand's inequality (see \cite{Bousquet2003Talagrand}) to establish the asymptotic result.

To facilitate the discussion, we borrow the following notations from \cite{BCCHKN}: Let $\{X_i\}_{i=1}^n$ be an i.i.d random sample in $\mathbf{R}$ and let $\{\phi_j\}_{j=1}^\infty$ be an orthonormal basis of $L^2([0,1],\mu)$. With some abuse of notation, let $\Phi$ denote the CDF of the standard normal variable, and define
\begin{equation}\label{def:Z}
Z_{ij}:= \phi_j(X_i) - E\left[\phi_j(X_i)\right]
\end{equation} 
and we let $\hat{Z}_{ij}$ be the sample analog of $Z_{ij}$
\begin{equation}\label{def:Zhat}
\hat Z_{ij} = \phi_j(X_i) - \frac{1}{n}\sum_{k=1}^n\phi_j(X_k).
\end{equation} 
We introduce the following regularity conditions.

\begin{condition}\label{regularity}\
Let $Z_{ij}$ be defined as in (\ref{def:Z}) and suppose that the orthonormal basis $\{\phi_j\}_{j=1}^\infty$ satisfies $\max_{1\leq j\leq J}\|\phi_j(\cdot)\|_\infty \leq M_J$ for some $M_J$.

\begin{itemize}
    \item[(i)] $n^{-1}\sum_{i=1}^n E\left[Z_{ij}^2\right]\geq 1$ for all $1\leq j\leq J$;
    \item[(ii)] $J = n^p$ for some known $p>0$;
    \item[(iii)] $M_J^2\leq n/\log(n)$; 
    \item[(iv)] $(JM_J/n)\alpha_n^{1/2}\leq n^{-(2k-1)/(2k)}$ for some positive sequence $(\alpha_n)\downarrow 0$.
\end{itemize}

\end{condition}

Condition \ref{regularity} will be assumed to establish the main results in this section. We want to emphasize that all but (i) in Condition \ref{regularity} can be verified, and we note that (i) is stated in the form that is convenient for the proof and the lower bound $1$ in the statement can be replaced by any positive constant. In particular, (ii) is chosen by researchers, and (iii) can be checked for a given orthonormal basis. Note that when the chosen orthonormal basis is uniformly bounded, such as the cosine basis, we have $M_J^2 = M$ for some constant $M$, in which case (iii) is trivially satisfied, and we can potentially choose $J\gg n$. On the other hand, if we have a growing basis, the choices of $J$ can be limited depending on how fast the basis grows. For example, the Legendre basis grows with $M_J^2 = 2J+1$, and we have to choose $J$ such that $J = n^p$ for some $p<1$ ($p$ can be close to $1$ for $n$ large). The requirement (iv) can be viewed as the extra cost we pay when bounding the variance term in the MISE. In the next theorem, we propose a $\lambda$ such that $\lambda\geq (1-\alpha_n)$-quantile of $\|\hat{\theta}^J-\theta^J\|_\infty$ with $\alpha_n$ goes to zero sufficiently fast so that the MISE can be established.

\begin{theorem}\label{thm:ta}
Let $\lambda$ be defined as follows,
\begin{equation}\label{ta:lambda}
\lambda :=\sqrt{\frac{\log(J)}{n}}\Phi^{-1}\left(1-\frac{1}{2\sqrt{2\pi}}\frac{1}{\sqrt{2\log(J)}}\frac{1}{J}\right)\max_{1\leq j\leq J}\left(\frac{1}{n}\sum_{i=1}^n \hat{Z}_{ij}^2\right)^\frac{1}{2}.
\end{equation}
Assume (i)-(iii) in Condition \ref{regularity} are satisfied. Then there exists $n^*\in\mathbf{N}$ such that for all $n\geq n^*$,
\[
\lambda\geq (1-\alpha_n)-\text{quantile of}\ \|\hat{\theta}^J-\theta^J\|_\infty
\]
with $\alpha_n = (Jn)^{-2} + 2n^{-3}$.
\end{theorem}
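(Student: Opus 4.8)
The plan is to prove the (equivalent) statement that $P\big(\|\hat\theta^J-\theta^J\|_\infty > \lambda\big)\le\alpha_n$ for all $n\ge n^*$, which is what ``$\lambda\ge(1-\alpha_n)$-quantile of $\|\hat\theta^J-\theta^J\|_\infty$'' means here (cf. (\ref{lambda})). First I would set up notation: since $\hat\theta_j-\theta_j=\tfrac1n\sum_{i=1}^n Z_{ij}$, we have $\|\hat\theta^J-\theta^J\|_\infty=\sup_{j\le J}\big|\tfrac1n\sum_i Z_{ij}\big|=n^{-1}\sup_{g\in\mathcal G}\sum_i(g(X_i)-Eg)$ for the finite class $\mathcal G=\{\pm\phi_j:j\le J\}$. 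Write $\hat s_j^2=\tfrac1n\sum_i\hat Z_{ij}^2$ (the sample variance of $\phi_j(X)$), $\sigma_j^2=\tfrac1n\sum_iE[Z_{ij}^2]$, $\bar\sigma^2=\max_{j\le J}\sigma_j^2$ (so $\bar\sigma^2\ge1$ by Condition \ref{regularity}(i)), and $q_J:=\Phi^{-1}\big(1-\tfrac{1}{2\sqrt{2\pi}\sqrt{2\log J}\,J}\big)$, so that $\lambda=\sqrt{\log J/n}\,q_J\,\max_{j\le J}\hat s_j$. Using the Gaussian tail bound $1-\Phi(x)\le\phi(x)/x$ one checks $q_J\ge\sqrt{2\log J}\,(1-o(1))$, so morally $\lambda\gtrsim\bar\sigma\,\log J/\sqrt n$, which is a factor $\sqrt{\log J}$ larger than the true order $\bar\sigma\sqrt{\log J/n}$ of $\|\hat\theta^J-\theta^J\|_\infty$; this slack is exactly the ``log term'' by which the final estimator misses the minimax rate, and it is what makes the theorem true with room to spare.

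The argument then rests on three ingredients, to be combined by a union bound. (a) A lower bound for the random normalizer: on an event $\mathcal E_1$ with $P(\mathcal E_1)\ge1-(Jn)^{-2}$ one has $\max_{j\le J}\hat s_j^2\ge c\,\bar\sigma^2$ for an explicit constant $c>0$. To get this, pick $j^*$ with $\sigma_{j^*}^2=\bar\sigma^2$; write $\hat s_{j^*}^2=\tfrac1n\sum_iZ_{ij^*}^2-\bar Z_{j^*}^2$ with $\bar Z_{j^*}^2\le\|\hat\theta^J-\theta^J\|_\infty^2$ controlled by a crude Bernstein-plus-union bound, and bound the lower tail of $\tfrac1n\sum_iZ_{ij^*}^2$ around $\sigma_{j^*}^2$ by Bennett/Bernstein using $|Z_{ij^*}|\le2M_J$ and Condition \ref{regularity}(iii); exploiting that Condition \ref{regularity}(i) holds for \emph{every} $j$ lets one run this on a low-order basis element when $M_J$ is large. (b) A maximal inequality: the Bernstein MGF estimate together with a union bound over $\mathcal G$ gives $E\|\hat\theta^J-\theta^J\|_\infty\le\sqrt{2\bar\sigma^2\log(2J)/n}+\tfrac23 M_J\log(2J)/n=:\mu_n$. (c) Concentration: Bousquet's form of Talagrand's inequality (\cite{Bousquet2003Talagrand}) applied to $\sup_{g\in\mathcal G}\sum_i(g(X_i)-Eg)$, with envelope $2M_J$ and wimpy variance $\bar\sigma^2$, yields, for $t=3\log n$, an event $\mathcal E_2$ with $P(\mathcal E_2)\ge1-2n^{-3}$ on which $\|\hat\theta^J-\theta^J\|_\infty\le\mu_n+\sqrt{2(\bar\sigma^2+4M_J\mu_n)\,3\log n/n}+2M_J\cdot3\log n/n$.

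Finally I would combine: on $\mathcal E_1\cap\mathcal E_2$, which has probability at least $1-(Jn)^{-2}-2n^{-3}=1-\alpha_n$, it remains to verify that the right-hand side of (c) is $\le\lambda$. On $\mathcal E_1$, $\lambda\ge\sqrt{\log J/n}\,q_J\,\sqrt c\,\bar\sigma\gtrsim\bar\sigma\,\log J/\sqrt n$, while every term in the bound from (c) is $O\big((\bar\sigma\vee M_J^{1/2})\sqrt{\log n/n}+M_J\log n/n\big)$; invoking Condition \ref{regularity}(ii) ($\log J\asymp\log n$) and (iii) ($M_J\le\sqrt{n/\log n}$), all these terms are $O(\bar\sigma\sqrt{\log n/n})$ and hence $o(\bar\sigma\log J/\sqrt n)$, so they are dominated by $\lambda$ once $n\ge n^*$. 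I expect the main obstacle to be ingredient (a): honestly pushing the failure probability of the variance-lower-bound event down to $(Jn)^{-2}$ under only Conditions \ref{regularity}(i)--(iii), since $|Z_{ij^*}|$ can be of order $\sqrt{n/\log n}$ so the relevant lower-tail exponent is only of order $n/M_J^2\asymp\log n$ with a constant tied to $\sigma_{j^*}^2$; this is where the ``holds for all $j$'' part of (i) and careful constant-tracking are needed. A secondary, purely mechanical hurdle is the two-sided estimate of $q_J=\Phi^{-1}(1-u_J)$ (both $q_J\le\sqrt{2\log(1/u_J)}$ for the trivial direction and $q_J\ge\sqrt{2\log(1/u_J)-\log(2\pi)-\log(2\log(1/u_J))}$-type lower bounds) needed to make the final comparison and pin down $n^*$ explicitly.
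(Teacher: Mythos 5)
Your proposal is correct and follows essentially the same route as the paper's proof: a deterministic lower bound $\Phi^{-1}(1-u_J)\ge\sqrt{2\log J}$ via the Gaussian tail inequality, a Bernstein lower-tail bound showing $\max_j\tfrac1n\sum_i\hat Z_{ij}^2$ is bounded below by a constant except with probability $(Jn)^{-2}$ (using Condition \ref{regularity}(i) and $E[Z_{ij}^4]\lesssim M_J^2$), so that $\lambda\gtrsim\log J/\sqrt n$ with high probability, and then Bousquet's form of Talagrand's inequality with $x=3\log n$ to show this dominates $\|\hat\theta^J-\theta^J\|_\infty$ up to the stated exceptional probability. The only differences are bookkeeping (how the $(Jn)^{-2}+2n^{-3}$ budget is split between the normalizer event and the deviation event, and handling the $\hat Z_{ij}-Z_{ij}$ correction via the exact variance identity rather than the triangle inequality), and the obstacles you flag are exactly the steps the paper resolves with Conditions \ref{regularity}(i) and (iii).
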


The proof of the theorem is given in the appendix and it can be easily adapted to allow other choices of $\alpha_n$ so that (iv) in Condition \ref{regularity} can be satisfied. The expression of $\lambda$ in the theorem bears some similarities to the ones given in \cite{BCCH} and in \cite{BCCHKN} (Theorem 2.4) using moderate deviation theories. Compared to their constructions, for $J = n^p$, our $\lambda$ has an extra $\sqrt{\log(n)}$ multiplied (ignoring the constants). Roughly speaking, by the Talagrand's inequality, this extra $\log$ term pushes our $\lambda$ into the exponential tail and allows us to obtain $\alpha_n$ that is sufficiently fast for establishing MISE. We remark that the $\lambda$ based on the moderate deviation theories can still be used to establish the non-asymptotic rate. 

With this result, we can establish the second main result of this section. In particular, we are going to assume that the density of the random variable in question belongs to the approximate sparsity space $\tilde\Theta_k(\Phi,A,C)$, and we will show that the post-processed estimator $\hat{f}^*$ defined in \ref{p-alg} admits the minimax rate of convergence on $\tilde\Theta_k(\Phi,A,C)$ up to a log term.

\begin{theorem}\label{thm:rates}
Let $\{X_i\}_{i=1}^n\sim X$ be an i.i.d random sample with support $[0,1]\subset \mathbf{R}$. Assume that the true density $f$ of $X$ is in $\tilde\Theta_k(\Phi,A,C)$ uniformly bounded by some constant $\tilde{C}$ and that Condition \ref{regularity} is satisfied. Let the regularization parameter $\lambda$ be chosen as in (\ref{ta:lambda}) and let $\hat{f}^*$ be the estimator defined in \ref{p-alg}. Then 
\[
\sup_{f\in\tilde\Theta_k(\Phi,A,C)} E_f[\|f-\hat{f}^*\|_{L_2}^2] = O\left(\left(\frac{\log^2(n)}{n}\right)^{\frac{2k-1}{2k}}\right).
\]
\end{theorem}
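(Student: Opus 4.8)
The plan is to bound the mean integrated squared error of the hard-thresholded estimator $\tilde{f}_J$ of \eqref{series:post} and then transfer the bound to the projected estimator $\hat{f}^*$. Since (by the discussion after Definition~\ref{p-alg}) $\hat{f}^*$ is the orthogonal projection of $\tilde{f}_J$ onto the closed convex set $\mathcal{D}\subset L^2([0,1],\mu)$ of probability densities and the true density $f$ already lies in $\mathcal{D}$, nonexpansiveness of metric projections onto closed convex sets gives $\|\hat{f}^*-f\|_{L_2}\le\|\tilde{f}_J-f\|_{L_2}$ for every realization, so it suffices to bound $E_f[\|\tilde{f}_J-f\|_{L_2}^2]$ uniformly over $f$. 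Orthonormality of $\Phi$ with $\tilde\theta_j=\omega_j\hat\theta_j$ for $j\le J$ and $\tilde\theta_j=0$ for $j>J$ gives, by Parseval,
\begin{equation}\label{eq:plan-parseval}
\|\tilde{f}_J-f\|_{L_2}^2=\sum_{j=1}^J(\omega_j\hat\theta_j-\theta_j)^2+\sum_{j=J+1}^\infty\theta_j^2 ,
\end{equation}
whose last term is deterministic and, by the tail-sum requirement of Definition~\ref{def:as}, bounded by $CJ^{-2k+1}=Cn^{-p(2k-1)}$, of smaller order than the target rate once the cutoff $J=n^p$ is taken with $p\ge 1/(2k)$ (the intended large-cutoff regime).

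Next I would bound the random part of \eqref{eq:plan-parseval} on two high-probability events: $\mathcal{E}_n:=\{\|\hat\theta^J-\theta^J\|_\infty\le\lambda\}$, which by Theorem~\ref{thm:ta} has $P(\mathcal{E}_n)\ge1-\alpha_n$ for $n\ge n^*$, and $\mathcal{G}_n:=\{c_1\log(n)/\sqrt{n}\le\lambda\le c_2\log(n)/\sqrt{n}\}$ for suitable $0<c_1<c_2$. That $P(\mathcal{G}_n)$ can be made $\ge1-n^{-m}$ for any fixed $m$ is read off from \eqref{ta:lambda}: $\sqrt{\log(J)/n}=\sqrt{p\log(n)/n}$, the Gaussian-quantile factor is of order $\sqrt{\log J}\asymp\sqrt{\log n}$ by the usual tail inversion, and $\max_{1\le j\le J}(n^{-1}\sum_i\hat{Z}_{ij}^2)^{1/2}$ is trapped between positive constants with the stated probability by a uniform Bernstein/union bound over the $J=n^p$ coordinates, using Condition~\ref{regularity}(i) for the lower bound and $\mathrm{Var}(\phi_j(X))\le E[\phi_j^2(X)]\le\tilde{C}$ (boundedness of the density) together with $M_J^2\le n/\log(n)$ (Condition~\ref{regularity}(iii)) for the upper bound. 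On $(\mathcal{E}_n\cap\mathcal{G}_n)^c$ I would use the crude bound $\|\tilde{f}_J-f\|_{L_2}^2\le2\|\tilde{f}_J\|_{L_2}^2+2\|f\|_{L_2}^2\le2JM_J^2+2\tilde{C}$ (each $|\hat\theta_j|\le M_J$ and $\|f\|_{L_2}^2\le\|f\|_\infty\le\tilde{C}$), so this off-event contribution is at most $(2JM_J^2+2\tilde{C})(\alpha_n+P(\mathcal{G}_n^c))$, negligible relative to the target rate by Condition~\ref{regularity} with $\alpha_n$ and $m$ chosen suitably (cf.\ the remark after Theorem~\ref{thm:ta}).

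On $\mathcal{E}_n\cap\mathcal{G}_n$ I would decompose the first sum in \eqref{eq:plan-parseval} along the selected set $T$ of \eqref{def:T}:
\begin{equation}\label{eq:plan-split}
\sum_{j=1}^J(\omega_j\hat\theta_j-\theta_j)^2=\sum_{j\in T}(\hat\theta_j-\theta_j)^2+\sum_{j\le J,\ j\notin T}\theta_j^2 .
\end{equation}
On $\mathcal{E}_n$ the first piece is at most $|T|\lambda^2$, and $T\subseteq\{j:|\theta_j|\ge\lambda/2\}\cup\{j\le J:|\hat\theta_j-\theta_j|\ge\lambda/2\}$ (if $|\hat\theta_j|\ge\lambda$ and $|\theta_j|<\lambda/2$ then $|\hat\theta_j-\theta_j|\ge\lambda/2$). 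By the re-ordering condition of Definition~\ref{def:as}, $\#\{j:|\theta_j|\ge\lambda/2\}\le(2A/\lambda)^{1/k}$, contributing at most $(2A)^{1/k}\lambda^{2-1/k}$; the expected size of the second set, $\sum_{j\le J}P(|\hat\theta_j-\theta_j|\ge\lambda/2)$, is smaller than any power of $n$ by Bernstein's inequality on $\mathcal{G}_n$ (there $\lambda/2$ is of order $\sqrt{\log n}$ standard deviations of $\hat\theta_j-\theta_j$, so, using $\mathrm{Var}(\phi_j(X))\le\tilde{C}$ and $M_J^2\le n/\log(n)$, the Bernstein exponent is of order $(\log n)^{3/2}$), and after multiplying by $\lambda^2\lesssim\log^2(n)/n$ this ``false-positive'' term is negligible. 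For the second piece of \eqref{eq:plan-split}, on $\mathcal{E}_n$ the event $j\notin T$ forces $|\hat\theta_j|<\lambda$, hence $|\theta_j|<2\lambda$; with the re-ordering bound $|\theta_{(r)}|\le Ar^{-k}$,
\begin{equation}\label{eq:plan-bias}
\sum_{j\le J,\ j\notin T}\theta_j^2\ \le\ \sum_{r\ge1}\min\!\left(A^2r^{-2k},\,4\lambda^2\right)\ \lesssim\ \lambda^{2-1/k},
\end{equation}
on splitting the sum at $r_0\asymp\lambda^{-1/k}$.

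Assembling, on $\mathcal{E}_n\cap\mathcal{G}_n$ the dominant contributions are $(2A)^{1/k}\lambda^{2-1/k}$ and the right-hand side of \eqref{eq:plan-bias}; since $2-1/k=2(2k-1)/(2k)$ and $\lambda\asymp\log(n)/\sqrt{n}$ on $\mathcal{G}_n$, both are $\asymp(\lambda^2)^{(2k-1)/(2k)}\asymp(\log^2(n)/n)^{(2k-1)/(2k)}$. Adding the truncation bias $CJ^{-2k+1}$ (smaller order for $p\ge1/(2k)$) and the negligible off-event and false-positive contributions above, and observing that all bounds depend only on $A,k,C,\tilde{C}$ and hence are uniform over $f\in\tilde\Theta_k(\Phi,A,C)$, yields the claimed $O((\log^2(n)/n)^{(2k-1)/(2k)})$. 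The main obstacle, I expect, is the probabilistic bookkeeping of the two middle paragraphs --- showing that $\lambda$ concentrates at the moderate-deviation scale $\log(n)/\sqrt{n}$ and that the ``false-positive'' set $\{j\le J:|\hat\theta_j-\theta_j|\ge\lambda/2\}$ is negligible in expectation --- which is precisely where the Talagrand-based choice of $\lambda$ in Theorem~\ref{thm:ta} and the verifiable parts of Condition~\ref{regularity} do the work; granting those, the ``approximate-sparsity'' part (the two sums in \eqref{eq:plan-split}) follows directly from the two defining restrictions of $\Theta_k(\Phi,A,C)$.
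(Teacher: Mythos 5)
Your proposal is correct and follows essentially the same route as the paper's proof: the same decomposition into selected coefficients, non-selected coefficients, and truncation bias, the same use of Theorem \ref{thm:ta} to place $\|\hat\theta^J-\theta^J\|_\infty\le\lambda$ on a high-probability event, the same sparsity-based count of $|T|$ and the $\sum_r\min(A^2r^{-2k},\lambda^2)\lesssim\lambda^{(2k-1)/k}$ bias bound, and crude bounds times small probabilities off the event. The only material differences are presentational: the paper thresholds at $2\lambda$ inside the proof, so that on the good event $j\in T$ forces $|\theta_j|\ge\lambda$ and your extra ``false-positive'' bookkeeping for $\{j:|\hat\theta_j-\theta_j|\ge\lambda/2\}$ is not needed, and it controls the random $\lambda$ by bounding $E[\lambda^{(2k-1)/k}]$ directly via Jensen and a maximal inequality rather than by localizing $\lambda$ to a two-sided event $\mathcal{G}_n$.
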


The proof of the theorem is given in the appendix. To bound the MISE, we first decompose the MISE into roughly the standard ``variance'' and ``bias'' components, and then we bound each separately with the help of Theorem \ref{thm:ta}. In the assumptions of the theorem, $f\in \tilde\Theta_k(\Phi,A,C)$ imposes restrictions on the Fourier coefficients of $f$, which will play a crucial role in establishing the rates. Moreover, we assume that $\tilde\Theta_k(\Phi,A,C)$ is uniformly bounded, which is also assumed when we establish the minimax rates. Note that for the cases when $k>1$ and the orthonormal basis $\Phi$ is uniformly bounded, one can verify that $\tilde\Theta_k(\Phi,A,C)$ is uniformly bounded. In addition, Condition \ref{regularity} is assumed so that we can utilize the results in Theorem \ref{thm:ta}, and as we commented before, the requirements in Condition \ref{regularity} are easy to verify. 

We also want to emphasize the adaptive nature of our estimator. In particular, once the researchers have decided on which orthonormal basis to use, the construction of the estimator $\hat{f}^*$ with data-driven $\lambda$ does not depend on the assumptions on the approximate sparsity class (e.g. how fast the Fourier coefficients decay). Theorem \ref{thm:rates} simply states that our estimator achieves the minimax rate on any approximate sparsity class. This is attractive in practice since the researchers do not have to assume the smoothness of the true data-generating density other than that it belongs to \textit{some} approximate sparsity class.

\subsection{Example Using Cosine Basis}
In this section, we are going to illustrate our previous results with the differentiable densities and the cosine basis. The standard cosine orthonormal basis $\{\phi_j\}_{j=1}^\infty$ on $L^2([0,1],\mu)$ is defined as follows:
\begin{equation}\label{cos_basis}
\phi_1(x) = 1;\quad \phi_j(x)=\sqrt{2}\cos(\pi (j-1)x),~j=2,3,\cdots
\end{equation}
which is uniformly bounded. Suppose $f\in L^2([0,1],\mu)$ is twice differentiable. Then for $\{\phi_j\}_{j=1}^\infty$ being the cosine basis defined above, $f(\cdot) = \sum_{j=1}^\infty\theta_j\phi_j(\cdot)$ and there exists some constant $c$ such that 
\[
|\theta_j|\leq cj^{-2}\int_0^1|f^{(2)}(x)|~dx,\quad j\geq 1
\]
As noted in \cite{efro2008} (Section 2.2), unless additional (boundary) assumptions are made on the function $f$, the series coefficients can not decay faster than the rate $j^{-2}$ (regardless of the smoothness of functions). This makes the set of twice differentiable functions with bounded second derivative a special example of the approximate sparsity class. In fact, this is an example of $\mathcal{E}_k(\Phi,A)$, which are themselves a special case of the approximate sparsity class without reordering. The next result follows directly from Theorem \ref{thm:rates}.

\begin{corollary}\label{col:cos_rates}
Suppose $\{X_i\}_{i=1}^n\sim X$ is an i.i.d sample of random variables with $X_i\in [0,1]$. Suppose the true density $f$ of $X$ is such that $f\in L^2([0,1],\mu)$ and $f$ is twice differentiable with bounded second derivative. Let $\{\phi_j\}_{j=1}^{\infty}$ be the cosine orthonormal basis defined as in (\ref{cos_basis}). Assume requirement (i) in Condition \ref{regularity} is satisfied and let $J = n^p$ for some $p>0$. Let the regularization parameter $\lambda$ be chosen as in (\ref{ta:lambda}) and let $\hat{f}^*$ be the estimator defined in \ref{p-alg}. Then 
\[
E\left[\|f-\hat{f}^*\|_{L_2}^2\right] = O\left(\left(\frac{\log^2(n)}{n}\right)^{\frac{3}{4}}\right).
\]
\end{corollary}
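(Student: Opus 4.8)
The plan is to verify that the hypotheses of Theorem~\ref{thm:rates} are met in this special case, so that the conclusion follows by direct substitution. The key observations are: (a) the cosine basis in (\ref{cos_basis}) is uniformly bounded, so that $\max_{1\leq j\leq J}\|\phi_j\|_\infty \leq M_J = \sqrt{2} =: M$, a constant; (b) a twice differentiable density $f$ with bounded second derivative has Fourier coefficients satisfying $|\theta_j|\leq c\,j^{-2}\int_0^1|f^{(2)}|\,dx =: A j^{-2}$, so that (using the integral bound displayed just after Definition~\ref{def:sandwich}) $f\in\mathcal{E}_2(\Phi,A)\subseteq\Theta_2(\Phi,A,C)$ with $C = A^2/3$; and (c) since $f$ is a density, $f\in\tilde\Theta_2(\Phi,A,C)$. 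So the relevant smoothness index is $k=2$.

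First I would check Condition~\ref{regularity}. Part (i) is assumed in the corollary. Part (ii) holds because $J=n^p$ is chosen that way. Part (iii) holds trivially since $M_J^2 = 2 \leq n/\log n$ for all $n$ large. For part (iv), I need a positive sequence $(\alpha_n)\downarrow 0$ with $(JM_J/n)\alpha_n^{1/2}\leq n^{-(2k-1)/(2k)} = n^{-3/4}$; since $JM_J/n = \sqrt{2}\,n^{p-1}$, it suffices to take, say, $\alpha_n = n^{-2}$ when $p\leq 1/4$, and more generally $\alpha_n$ of polynomial order $n^{-2(p-1/4)-\delta}$ for small $\delta>0$ when $p>1/4$ — in any case a polynomially decaying $\alpha_n$ works, and the remark following Theorem~\ref{thm:ta} notes that the construction there can be adapted to any such $\alpha_n$. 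Thus Condition~\ref{regularity} holds.

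Next I would confirm uniform boundedness of $\tilde\Theta_2(\Phi,A,C)$: since the cosine basis is uniformly bounded and $k=2>1$, the remark after Theorem~\ref{thm:rates} (``for the cases when $k>1$ and the orthonormal basis $\Phi$ is uniformly bounded, one can verify that $\tilde\Theta_k(\Phi,A,C)$ is uniformly bounded'') applies, so $f$ is bounded by some constant $\tilde C$. With all hypotheses of Theorem~\ref{thm:rates} in place, applying that theorem with $k=2$ gives
\[
E\left[\|f-\hat{f}^*\|_{L_2}^2\right] \;=\; O\!\left(\left(\frac{\log^2(n)}{n}\right)^{\frac{2\cdot 2-1}{2\cdot 2}}\right) \;=\; O\!\left(\left(\frac{\log^2(n)}{n}\right)^{\frac{3}{4}}\right),
\]
which is the claim. (Strictly, Theorem~\ref{thm:rates} gives a supremum over $\tilde\Theta_2(\Phi,A,C)$; here we only need the bound for the single fixed density $f$, which is weaker.)

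The only genuine obstacle is part (iv) of Condition~\ref{regularity}: unlike (i)--(iii), it is not literally assumed in the corollary, so I must exhibit a valid $(\alpha_n)$. This is straightforward for the cosine basis because $M_J$ is constant, so the constraint reduces to a power-of-$n$ inequality; the only mild care needed is that for large $p$ the admissible $\alpha_n$ must decay fast enough, and one should note (citing the remark after Theorem~\ref{thm:ta}) that the $\lambda$ in (\ref{ta:lambda}) still satisfies the required quantile bound for such $\alpha_n$. Everything else is bookkeeping: identifying $A$, $C$, and $k=2$, and invoking the boundedness remark.
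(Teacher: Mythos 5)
Your proposal is correct and follows the same route the paper takes: the paper states that the corollary ``follows directly from Theorem~\ref{thm:rates}'' after noting that the cosine basis is uniformly bounded so that the boundedness assumption and Condition~\ref{regularity} can be verified, which is exactly the checklist you carry out (with $k=2$, constant $M_J$, and a polynomially decaying $\alpha_n$ for part (iv)). Your explicit verification of (iv), including the caveat for large $p$ via the remark after Theorem~\ref{thm:ta}, is a welcome elaboration of a step the paper leaves implicit.
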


Note that since the cosine basis is uniformly bounded, we can drop the assumption that the density is bounded and the requirements in Condition \ref{regularity} can be verified. Moreover, in view of our results in Section 3, the rate in this corollary is minimax up to a log term. As the approximate sparsity class and $\mathcal{E}_k(\Phi,A)$ class are new (and we commented previously that the familiar Sobolev ellipsoids are subsets of these classes), we have not yet considered the performances of other density estimators on such classes, and whether there are other adaptive rate-optimal estimators on these new classes remains an open question.

\section{Simulations}

In this section, we conduct simulation studies in which we compare the simulated MISE of our estimator introduced in Section 4 with an alternative estimator for which the series cutoff has to be properly specified. The true data generating density is constructed to be in the approximate sparsity class. 


\subsection{Design density in approximate sparsity class}
For the first simulation, we construct a density whose (cosine) Fourier coefficients satisfy approximate sparsity. Let 
\begin{equation}\label{as_design}
f_X(\cdot) = \sum_{j=1}^\infty\theta_j\phi_j(\cdot)
\end{equation}
where $\phi_j$'s are the cosine basis terms defined in (\ref{cos_basis}). We specify the $\theta_j$'s in the following way:
\begin{itemize}
    \item $\theta_1 = 1$, $\theta_j = Aj^{-2}$ for $j = 2,3,7,8$;
    \item $\theta_5 = A10^{-2}$, $\theta_{11} = A4^{-2}$, $\theta_{13} = A6^{-2}$, $\theta_{14} = A5^{-2}$, $\theta_{15} = A9^{-2}$;
    \item $\theta_j = 0$ for $j = 4,6,9,10,12$ and for all $j\geq 16$.   
\end{itemize}
As we've shown in section 4, since $\theta_1 = 1$, $f_X$ integrates to 1. The constant $A = 2$ is chosen such that $f_X$ in (\ref{as_design}) is non-negative and hence a proper probability density.

\subsection{Simulation procedures}

To simulate MISE, we proceed in following steps:

\emph{Step 1}: We draw $B = 1000$ independent i.i.d. samples $\{X_i\}_{i=1}^{N}$ of size $N$. Here the $N$ we consider will be $N = 5000, 10000, 15000, 20000$.
    
\emph{Step 2(a)}: For each sample $\{X_i\}_{i=1}^{N}$ from the true density $f_X$, we construct an estimator $\hat{f}^*$ described in section 4:
    \begin{itemize}
        \item We use cosine orthonormal basis defined in (\ref{cos_basis});
        \item The pre-processed estimator $\tilde{f}_J$ is constructed as in (\ref{series:post}), with $J = 200$;
        \item The penalty parameter $\lambda$ is chosen as in (\ref{ta:lambda});
        \item We pass $\tilde{f}_J$ to the p-algorithm defined in (\ref{p-alg}):
        \begin{itemize}
            \item The integral $C_{k+1}:= \int_0^1 f_{k+1}(x) dx$ at each iteration is approximated using numerical integration, and we denote the approximated integral as $\hat{C}_{k+1}$;
            \item The p-algorithm stops when $|\hat{C}_{k+1} - 1| < e^*$ for user specified $e^*$. This returns the positive estimator $\hat{f}^* = f_{k+1}$.
        \end{itemize}
    \end{itemize}

\emph{Step 2(b)}: For each sample $\{X_i\}_{i=1}^{N}$ from the true density $f_X$, we construct an alternative comparison estimator $\check{f}$ in the following way:
    \begin{itemize}
        \item We use cosine orthonormal basis defined in (\ref{cos_basis});
        \item We pick series cutoff $J$ to be $N^{1/4}$ and construct the natural estimator
        \begin{equation}
        \hat{f}_J(\cdot) = \sum_{j=1}^J\hat\theta_j\phi_j(\cdot),\quad \text{where}\quad \hat\theta_j = \frac{1}{N}\sum_{j=1}^n \phi_j(X_i)
        \end{equation}
        \item We pass $\hat{f}_J$ to the p-algorithm defined in (\ref{p-alg}) in the same way as in Step 2(a), which returns a positive estimator $\check{f}$.
    \end{itemize}

\emph{Step 3}: For each $1\leq b\leq B$ sample defined in \emph{Step 1} and associated estimator from \emph{Step 2(a)}, we estimate the integrated squared error (ISE) using numerical integration, and we use $\widehat{ISE}_b$ to denote the approximated ISE. We then calculate the estimated MISE
\begin{equation}
\widehat{MISE} := \frac{1}{B}\sum_{b=1}^{B} \widehat{ISE}_b
\end{equation}
Repeat the same process for the estimator from Step 2(b).

\begin{remark}
Unlike our estimator in \emph{Step 2(a)}, for the comparison estimator in \emph{Step 2(b)} we need to properly specify the series cutoff $J$. In the simulation, for this comparison estimator, we choose $J = N^{1/4}$ for the following reason. When the true density is unknown, if the researcher is willing to make assumptions on the smoothness of the density, they can then determine the series cutoff based on such assumptions. For example, if the researcher assumes the true density is twice differentiable with bounded second derivative, as we discussed in Section 5, the Fourier coefficients $\theta_j$ decay at the rate $j^{-2}$, and one can show that a series cutoff $J = N^{1/4}$ minimizes the MISE under such assumption. 
\end{remark}

\subsection{Simulation Results}
The simulation design described above is coded directly using Python, with code available upon request. Since the design density is not a known density coded in Python packages, we use inverse transform sampling to generate random samples from our design density. We illustrate the performance of the inverse transform sampling in Figure (\ref{fig:random_sample}), where the solid line is the design density, and the normalized histogram below the solid line is constructed using a random sample of size $10000$ from such inverse transform sampling.

\begin{figure}[ht]
\includegraphics[width = \linewidth]{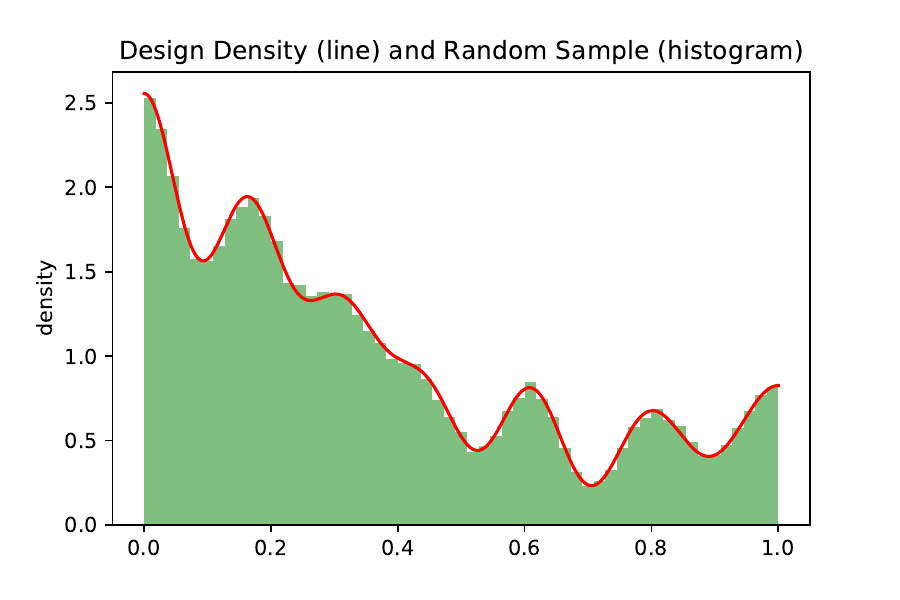}
\caption{Design Density with Random Sample}
\label{fig:random_sample}
\end{figure}

We present the simulation results in Figure (\ref{fig:MISE}). Each dot represents the simulated MISE for a given sample size, where the red dots (labeled as ``f\_star'') are the simulated MISEs for our estimator and the blue dots (labeled as ``f\_check'') are for the comparison estimator. We make the following observations. First, as expected, for both our estimator and the comparison estimator, the simulated MISE decreases as sample size increases. For the comparison estimator, this reflects the fact that such estimator is using larger series cutoffs as sample size increases, which will eventually include all the nonzero terms of the series coefficients. On the other hand, our estimator requires choosing a large series cutoff ($J=200$) to start with and then uses a data-driven hard-thresholding procedure to select the relevant terms. Second, for each of the sample sizes in our consideration, the simulated MISE of our estimator is smaller than that of the comparison estimator. This is likely due to the special feature of our design density, where the large Fourier coefficients show up in later series terms. Our estimator estimates first $J=200$ series terms and the data-driven hard-thresholding procedure is able to pick up the large series terms to reduce the bias. On the other hand, the comparison estimator has to specify the series cutoff $N^{1/4}$ and may fail to include the large Fourier coefficients that show up in later terms.

These simulation results demonstrate that our estimator performs well (as measured by the simulated MISE) for estimating densities in the approximate sparsity class. Moreover, the results showcase the adaptive nature of our estimator in comparison to an alternative estimator for which the researcher has to make potentially restrictive smoothness assumptions in order to determine the proper series cutoff. Even then, if the sample size is not large enough, such comparison estimator may still miss large Fourier coefficients that show up in later series terms, which can result in larger estimation errors.

\begin{figure}[ht]
\includegraphics[width = \linewidth]{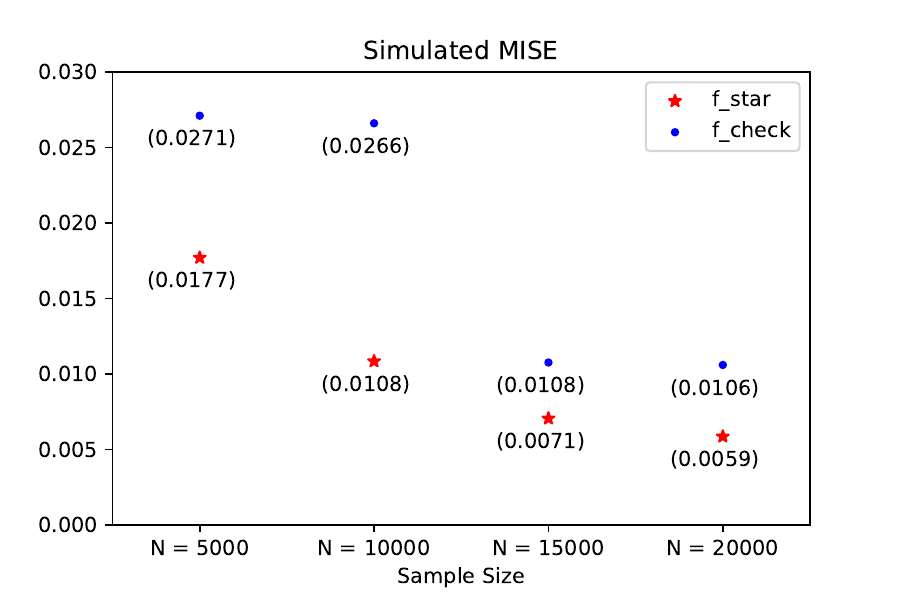}
\caption{Simulated MISE With Various Sample Sizes}
\label{fig:MISE}
\end{figure}

\section{Concluding Remarks}
In this paper, we have studied a new class of functions, which we call the approximate sparsity class, that is characterized by a new set of restrictions on the Fourier coefficients of those functions for a given orthonormal basis. We have derived the upper and lower bounds on $L^2$ $\epsilon$-metric entropy of the approximate sparsity class and we have established the minimax rates for nonparametric regression with Gaussian noise and for density estimation. We have shown that functions in such classes are natural candidates for the thresholding types of estimators and we proposed an adaptive density estimator that is nearly minimax optimal (up to a log term) over such class. For future research, we hope to study estimators for nonparametric regression for functions in the approximate sparsity class and we hope to generalize the approximate sparsity class to high-dimensional regression settings.

\begin{appendix}
\section{Proofs for Section 3}

\subsection{Proof of Lemma \protect\ref{lm:akek}}

First, we establish the metric entropy of $\mathcal{A}_k(\Phi,A,C)$, which relies on Theorem 3 in \cite{lorentz66}. In particular, note that in the definition of $\mathcal{A}_k(\Phi,A,C)$, we have $\sum_{j=J+1}^\infty\theta_j^2 \leq CJ^{-2k+1}$. So following \cite{lorentz66} notation, let $\delta_J^2  = CJ^{-2k+1}$. Moreover, simple calculation shows that $\delta_{2J}\leq c\delta_J$ for some constant $0<c<1$. Therefore, condition (13) in \cite{lorentz66} holds.

Let $\epsilon>0$ be given. Then by Theorem 3 of \cite{lorentz66}, the $\epsilon$-metric entropy of $\mathcal{A}_k(\Phi,A,C)$ is of order
\[
\min\{J:\delta_J^2 = C J^{-2k+1} \leq \epsilon^2\}
\]
solving which, we get $M_2(\epsilon,\mathcal{A}_k(\Phi,A,C))\asymp \epsilon^{-2/(2k-1)}$.

Second, we establish the metric entropy of $\mathcal{E}_k(\Phi,A)$. Recall that the set $\mathcal{E}_k(\Phi,A)$ is defined as
\[
\mathcal{E}_k(\Phi,A) := \left\{f\in L^2([0,1],\mu): f(\cdot)=\sum_{j=1}^\infty\theta_j\phi_j(\cdot);~|\theta_j|\leq Aj^{-k}~\forall j\geq 1\right\}.
\]
which is isometric to the set
\[
E_{k,A}:= \left\{ (\theta_j)_{j=1}^\infty: |\theta_j|<Aj^{-k} \right\}
\]
i.e. for any two elements $f,g\in \mathcal{E}_k(\Phi,A)$, we can find two unique elements $f^*, g^*$ in $E_{k,A}$ such that $\|f-g\|_{L^2([0,1],\mu)} = \|f^*-g^*\|_{\ell^2}$. This follows from the fact that $\Phi$ is an orthonormal basis. Define the following set
\[
E_{k,A}^Q := \left\{ (\theta_j)_{j=1}^\infty: |\theta_j|<Aj^{-k}\ \forall\ 1\leq j\leq Q;\ \theta_j = 0\ \forall j\geq Q+1 \right\}.
\]
which is a subset of $E_{k,A}$. It can be shown that $E_{k,A}^Q$ is isometric to the following set
\[
H_{k,A}^Q := \left\{ (\theta_j)_{j=1}^Q: |\theta_j|<Aj^{-k}\ \forall\ 1\leq j\leq Q \right\} \subseteq \mathbf{R}^Q.
\]
By the definition of isometry, the metric entropy of $\mathcal{E}_k(\Phi,A)$ is lower-bounded by the metric entropy of $H_{k,A}^Q$. We show that for a properly chosen $Q$, we can establish the desired lower bound.

Let $V$ denote the volume of $H_{k,A}^Q$ and let $v$ denote the volume of $\epsilon$-ball in $\mathbf{R}^Q$. Then the ratio $V/v$ provides a lower bound on the covering number of $H_{k,A}^Q$. By Sterling's formula, we have 
\[
v \asymp \frac{1}{Q\pi}\left(\frac{2\pi e}{Q}\right)^{\frac{Q}{2}}\epsilon^Q
\]
Then we have
\begin{align*}
\frac{V}{v} \asymp \left(\prod_{j=1}^Q 2Aj^{-k}\right)/\left(\frac{1}{Q\pi}\left(\frac{2\pi e}{Q}\right)^{\frac{Q}{2}}\epsilon^Q\right)
\end{align*}
Taking $\log$, we have
\begin{align*}
\log\left(\frac{V}{v}\right) &\gtrsim \sum_{j=1}^Q\log(2Aj^{-k}) - \log\left(\frac{1}{Q\pi}\left(\frac{2\pi e}{Q}\right)^{\frac{Q}{2}}\epsilon^Q \right)\\
&\gtrsim -k\sum_{j=1}^Q\log(j) + Q\log(2A)- \log(Q^{-(Q+1)/2}\epsilon^Q)\\
&\gtrsim -k(Q\log(Q)-Q + \Theta(\log(Q)))+ Q\log(2A)- \log(Q^{-(Q+1)/2}\epsilon^Q)\\
&\gtrsim (-k+\frac{1}{2})Q\log(Q) - Q\log(\epsilon) + (\log(2A) + k)Q
\end{align*}
where the third line holds by Sterling's approximation. Then take $Q$ to be such that $Q^{-k + 1/2} = \epsilon^{-1}$, in which case $Q = \epsilon^{-2/(2k-1)}$, and we have
\begin{align*}
\log\left(\frac{V}{v}\right) &\gtrsim \epsilon^{-2/(2k-1)}.
\end{align*}
This gives a lower bound on the capacity ($\log$ of covering number) of $H_{k,A}^Q$. Using the convenient fact that the packing number is bounded below by covering number (see for example, \cite{lorentz66}), we conclude that
\[
M_2(\epsilon,\mathcal{E}_k(\Phi,A)) \geq M_2(\epsilon,H_{k,A}^Q)\gtrsim \epsilon^{-2/(2k-1)}.
\]
\qed

\subsection{Proof of Theorem \protect\ref{thm:entropy}}
By Lemma \ref{lm:inclusion}, for properly chosen constant $C$ in the definition of $\mathcal{A}_k(\Phi,A,C)$, we have $\mathcal{E}_k(\Phi,A)\subseteq\Theta_k(\Phi,A,C)\subseteq\mathcal{A}_k(\Phi,A,C)$, and the results follow from Lemma \ref{lm:akek}. \qed

\subsection{Proof of Corollary \protect\ref{col:entropy}}
We will use the sandwich argument again to establish this result. Note we are given that $\Theta_k(\Phi,A,C)$ are uniformly bounded by some constant $M'$, which also holds for its subset $\mathcal{E}_k(\Phi,A)$. We will manipulate $\mathcal{E}_k(\Phi,A)$ to establishes the lower bound, while the upper bound is still given by the full approximation set $\mathcal{A}_k(\Phi,A,C)$.

First, consider the following transformation
\begin{align*}
\tilde{\mathcal{E}}_k(\Phi,A):=& \left\{\tilde{f} = \frac{f+M'+1}{\int f d\mu + M' + 1}: f\in \mathcal{E}_k(\Phi,A) \right\}\\
=&\left\{\tilde{f}(\cdot) = \frac{\theta_1 +M'+1 + \sum_{j=2}^\infty\theta_j\phi_j(\cdot)}{\theta_1 + M' + 1}: f(\cdot) = \sum_{j=1}^\infty\theta_j\phi_j(\cdot)\in \mathcal{E}_k(\Phi,A) \right\}\\
=&\left\{\tilde{f}(\cdot) = 1 + \sum_{j=2}^\infty\frac{\theta_j}{\theta_1 + M' + 1}\phi_j(\cdot): |\theta_j|<Aj^{-k} \right\}.
\end{align*}
This is a set of densities in $\mathcal{E}_k(\Phi,A')$, which is a subset $\tilde{\Theta}_k(\Phi,A',C')$ for some constant $A'$. Note that for $M'$ large, $A'\leq A$ and $C'\leq C$, which implies that $\tilde{\Theta}_k(\Phi,A',C')\subseteq \tilde{\Theta}_k(\Phi,A,C)$. Therefore, it suffices to establish a lower bound for $\tilde{\Theta}_k(\Phi,A',C')$.

Second, consider a subset of $\tilde{\mathcal{E}}_k(\Phi,A)$, denote by
\[
\mathcal{G}_1 := \left\{ g(\cdot) = 1 + \sum_{j=2}^\infty\theta_j\phi_j(\cdot):|\theta_j|<\tilde{A}j^{-k}\right\}.
\]
Note that $\mathcal{G}_1$ is a indeed a subset of $\tilde{\mathcal{E}}_k(\Phi,A)$ if, for example, $\tilde{A} = A/(A+M'+1)$. Moreover, we define
\[
\mathcal{G}_2 := \left\{ g(\cdot) = 1 + \sum_{j=1}^\infty\theta_j\phi_j(\cdot):|\theta_j|<A^*j^{-k}\right\}
\]
for some constant $A^*$. Note that we can change the index of $\theta_j$ by considering
\[
|\theta_j|<A^*j^{-k} \implies |\theta_j|<A^*\left(\frac{j}{j+1}\right)^{-k} (j+1)^{-k}\leq 2^kA^*(j+1)^{-k}.
\]
This implies that $\mathcal{G}_2$ is a subset of
\[
\mathcal{G}_3 := \left\{ g(\cdot) = 1 + \sum_{j=2}^\infty\theta_j\tilde{\phi}_j(\cdot):|\theta_j|<2^kA^*j^{-k}\right\}
\]
where $\tilde{\phi}_j = \phi_{j-1}$. Since $\{\phi_j\}$ is an orthonormal basis, it can be shown that the set $\mathcal{G}_1$ and $\mathcal{G}_3$ are isometric, which implies that they have the same order of entropy lower-bounded by the entropy of $\mathcal{G}_2$. Moreover, $\mathcal{G}_2$ and $\mathcal{E}_k(\Phi,A')$ are also isometric, so they also have the same order of $L^2([0,1],\mu)$ metric entropy.

Combining above results, we have found a subset of $\tilde{\Theta}_k(\Phi,A',C')$, $\mathcal{G}_1$, that has the same order of entropy as $\mathcal{E}_k(\Phi,A')$. Since $\tilde{\Theta}_k(\Phi,A',C')\subseteq \tilde{\Theta}_k(\Phi,A,C)$, by the results in Theorem \ref{thm:ta}, we conclude that
\[
M_2(\epsilon,\tilde{\Theta}_k(\Phi,A,C))\gtrsim \epsilon^{-2/(2k-1)}.
\]
On the other hand, since $\tilde{\Theta}_k(\Phi,A,C)\subseteq \mathcal{A}_k(\Phi,A,C)$, the upper bound follows. \qed

\subsection{Proof of Theorem \protect\ref{thm:minimax}}
First, we show the claim on the nonparametric regression. Given the entropy bounds from Theorem \ref{thm:entropy}, we can apply results from \cite{yangbarron97} (Theorem 9 and 10) and \cite{yangbarron99} (Theorem 6):
\[
\inf_{\hat{f}_n}\sup_{f\in\Theta_k(\Phi,A,C)}E\left[\|f-\hat{f}_n\|_2^2\right]\asymp n^{-\frac{2k-1}{2k}}.
\]
Second, we establish the minimax rates on the density subset $\tilde{\Theta}_k(\Phi,A,C)$. Again, for notational simplicity, we use $\|\cdot\|_2$ to denote the $L^2([0,1],\mu)$ norm, where $\mu$ is the Lebesgue measure on $[0,1]$. Note that we can not apply \cite{yangbarron99}'s results directly on our set $\tilde{\Theta}_k(\Phi,A,C)$ since it may not be bounded away from zero and it is not convex (due to reordering restrictions on the Fourier coefficients). Nevertheless, since $\mathcal{E}_k(\Phi,A) \subseteq \Theta_k(\Phi,A,C)\subseteq \mathcal{A}_k(\Phi,A,C)$, we have
\[
\tilde{\mathcal{E}}_k(\Phi,A) \subseteq \tilde{\Theta}_k(\Phi,A,C)\subseteq \tilde{\mathcal{A}}_k(\Phi,A,C)
\]
where (with some abuse of notation) $\tilde{\mathcal{E}}_k(\Phi,A)$, $\tilde{\Theta}_k(\Phi,A,C)$, and $\tilde{\mathcal{A}}_k(\Phi,A,C)$ are the density subsets of $\mathcal{E}_k(\Phi,A), \Theta_k(\Phi,A,C),\mathcal{A}_k(\Phi,A,C)$ respectively. Moreover, it can be verified from definition that $\tilde{\mathcal{E}}_k(\Phi,A)$ and $\tilde{\mathcal{A}}_k(\Phi,A,C)$ are both convex, so \cite{yangbarron99} applies to these two sets. In particular, the lower bound on the minimax rate of $\tilde{\Theta}_k(\Phi,A,C)$ is given by the a lower bound on $\tilde{\mathcal{E}}_k(\Phi,A)$ (the entropy of $\tilde{\mathcal{E}}_k(\Phi,A)$ is established in the proof of Corollary \ref{col:entropy}), and an upper bound is given by that of $\tilde{\mathcal{A}}_k(\Phi,A,C)$ (see Theorem 7 in \cite{yangbarron99}). Therefore, we have the following
\[
\inf_{\hat{f}_n}\sup_{f\in\tilde{\Theta}_k(\Phi,A,C)}E\left[\|f-\hat{f}_n\|_2^2\right]\asymp n^{-\frac{2k-1}{2k}}.
\]
\qed

\section{Proofs for Section 4}

\subsection{Proof of Theorem \protect\ref{thm:ta}}
Recall that $\lambda$ is defined as follows
\[
\lambda := \sqrt{\frac{\log(J)}{n}}\Phi^{-1}\left(1-\frac{1}{2\sqrt{2\pi}}\frac{1}{\sqrt{2\log(J)}}\frac{1}{J}\right)\max_{1\leq j\leq J}\left(\frac{1}{n}\sum_{i=1}^n \hat{Z}_{ij}^2\right)^\frac{1}{2}
\]
where $\hat{Z}_{ij}:= \phi_j(X_i) - n^{-1}\sum_{k=1}^n\phi_j(X_k)$. To establish the result, we need to bound $\lambda$ from below. Using the property of normal CDF that $1-\Phi(x)>1/(2\pi)^{1/2}(x/(x^2+1))\exp(-x^2/2)$ (see Lemma \ref{lm:gtail}), we have
\[
\Phi^{-1}\left(1-\frac{1}{2\sqrt{2\pi}}\frac{1}{\sqrt{2\log(J)}}\frac{1}{J}\right)> \sqrt{2\log(J)}.
\]
Now we bound $\max_{1\leq j\leq J}(n^{-1}\sum_{i=1}^n\hat{Z}_{ij}^2)^{1/2}$ from below. By triangle inequality, we have
\begin{align*}
\max_{1\leq j\leq J}\left(\frac{1}{n}\sum_{i=1}^n\hat{Z}_{ij}^2\right)^\frac{1}{2} &= \max_{1\leq j\leq J}\left(\frac{1}{n}\sum_{i=1}^n(\hat{Z}_{ij}-Z_{ij} + Z_{ij})^2\right)^\frac{1}{2}\\
& \geq \max_{1\leq j\leq J}\left(\frac{1}{n}\sum_{i=1}^nZ_{ij}^2\right)^\frac{1}{2} - \max_{1\leq j\leq J}\left(\frac{1}{n}\sum_{i=1}^n(\hat{Z}_{ij}-Z_{ij})^2\right)^\frac{1}{2}
\end{align*}
and we will bound each term separately.

First, following the step 1 in the proof of Theorem 2.4 in \cite{BCCHKN}, we have
\begin{align*}
P\left(\max_{1\leq j\leq J}\frac{1}{n}\sum_{i=1}^nZ_{ij}^2 \leq \frac{1}{2}\right) \leq P\left(\frac{1}{n}\sum_{i=1}^nZ_{ij}^2 \leq \frac{1}{2}\right)
\end{align*} 
where the inequality holds since $P(n^{-1}\sum_{i=1}^nZ_{ij}^2 \leq 1/2,\ \forall\ 1\leq j\leq J)\leq P(n^{-1}\sum_{i=1}^nZ_{ij}^2\leq 1/2)$ for any $1\leq j\leq J$. Then for a generic $1\leq j\leq J$ and some constant $c'>0$,
\begin{align*}
P\left(\frac{1}{n}\sum_{i=1}^nZ_{ij}^2\leq \frac{1}{2}\right) &\leq P\left(\frac{1}{n}\sum_{i=1}^nZ_{ij}^2\leq \frac{1}{2n}\sum_{i=1}^nE\left[Z_{ij}^2\right]\right)\\
& = P\left(\sum_{i=1}^nZ_{ij}^2-E\left[Z_{ij}^2\right]\leq -\frac{1}{2}\sum_{i=1}^nE\left[Z_{ij}^2\right]\right)\\
& \leq \exp\left(-\frac{(-\frac{1}{2}\sum_{i=1}^nE\left[Z_{ij}^2\right])^2}{2\sum_{i=1}^nE\left[Z_{ij}^4\right]}\right)\\
& \leq \exp\left(-\frac{n^2}{c'nM_J^2}\right)
\end{align*}
where the first inequality holds by the assumption that $E\left[Z_{ij}^2\right]\geq 1$, the second inequality holds by Bernstein inequality (see exercise 2.9 in \cite{Boucheron13}), and the last inequality holds by that $E\left[Z_{ij}^4\right] = E\left[(\phi_j(X)-E\left[\phi_j(X)\right])^4\right]\leq (2M_J)^2E\left[\phi_j^2(X)\right]\leq 4CM_J^2$. By assumption, $M_J^2\leq n/\log(n)$ and $J = n^p$, which implies $n/(c'M_J^2)\geq2\log(Jn)$. Then the above result implies that
\[
P\left(\frac{1}{n}\sum_{i=1}^nZ_{ij}^2\leq \frac{1}{2}\right) \leq \exp(-2\log(Jn)) = (Jn)^{-2}
\]
and in which case we have
\[
P\left(\max_{1\leq j\leq J}\frac{1}{n}\sum_{i=1}^nZ_{ij}^2 \leq \frac{1}{2}\right)\leq (Jn)^{-2}.
\]

Second, we bound the term $\max_{1\leq j\leq J}(n^{-1}\sum_{i=1}^n(\hat{Z}_{ij}-Z_{ij})^2)^{1/2}$. To simplify the notation, define
\[
\mathbb{Z}:= \max_{1\leq j\leq J}\left(\frac{1}{n}\sum_{i=1}^n(\hat{Z}_{ij}-Z_{ij})^2\right)^\frac{1}{2} = \max_{1\leq j\leq J}|\frac{1}{n}\sum_{i=1}^n\phi_j(X_i)-E\left[\phi_j(X_i)\right]|.
\]
By Bernstein inequality and maximal inequality (see Lemma \ref{lm:maximal} in the appendix),
\[
E\left[\mathbb{Z}\right] \leq \frac{K(M_J\log(J) + \sqrt{n\log(J)})}{n}.
\]
By Talagrand's inequality (see the version given by \cite{Bousquet2003Talagrand}),
\begin{align*}
P\left(\mathbb{Z}\geq E\left[\mathbb{Z}\right] +\sqrt{2\nu_nx} + \frac{Ux}{3}\right)\leq \exp(-x)
\end{align*}
where $U$ satisfies that $n^{-1}\|\phi_j(\cdot) - E\left[\phi_j(X_i)\right]\|_\infty\leq U <\infty$, and $\nu_n:= 2UE\left[\mathbb{Z}\right] + n\sigma^2$ where $\sigma^2:= n^{-2}\max_{1\leq j\leq J}E\left[(\phi_j(X_i)-E\left[\phi_j(X_i)\right])^2\right]$. Note that we can take $U:= 2M_J/n$ and $\sigma^2\leq C/n^2$. Then we have
\begin{align*}
&P\left(\mathbb{Z} \geq \frac{K(M_J\log(J)+ \sqrt{n\log(J)})}{n}\right.\\
&+\left. \sqrt{2\left(\frac{2M_J}{n}\frac{K(M_J\log(J) + \sqrt{n\log(J)})}{n}+ \frac{C}{n}\right)x} + \frac{2M_J}{n}x\right)\\
&\leq P\left(\mathbb{Z}\geq E\left[\mathbb{Z}\right] +\sqrt{2\nu_nx} + \frac{Ux}{3}\right)\leq \exp(-x).
\end{align*}
By Condition \ref{regularity}, $M_J^2 = o(n)$, which implies that there exists some $n_1\in\mathbf{N}$ such that for all $n\geq n_1$ and for $x=3\log(n)$ we have
\[
P\left(\mathbb{Z} \geq \frac{1}{2\sqrt{2}}\right) \leq \exp(-3\log(n)) = n^{-3}.
\]

Combining above, with probability at least $1-(Jn)^{-2} - n^{-3}$, we have the following 
\begin{align*}
\max_{1\leq j\leq J}\left(\frac{1}{n}\sum_{i=1}^n\hat{Z}_{ij}^2\right)^\frac{1}{2} & \geq \max_{1\leq j\leq J}\left(\frac{1}{n}\sum_{i=1}^nZ_{ij}^2\right)^\frac{1}{2} - \max_{1\leq j\leq J}\left(\frac{1}{n}\sum_{i=1}^n(\hat{Z}_{ij}-Z_{ij})^2\right)^\frac{1}{2}\\
& = \max_{1\leq j\leq J}\left(\frac{1}{n}\sum_{i=1}^nZ_{ij}^2\right)^\frac{1}{2}- \mathbb{Z}\\
&>\frac{1}{2\sqrt{2}}.
\end{align*}
To see this, note that
\begin{align*}
&P\left(\max_{1\leq j\leq J}\left(\frac{1}{n}\sum_{i=1}^n\hat{Z}_{ij}^2\right)^\frac{1}{2}> \frac{1}{2\sqrt{2}}\right)\\
\geq &P\left(\max_{1\leq j\leq J}\left(\frac{1}{n}\sum_{i=1}^nZ_{ij}^2\right)^\frac{1}{2}- \mathbb{Z}> \frac{1}{2\sqrt{2}}\right)\\
\geq &P\left(\max_{1\leq j\leq J}\left(\frac{1}{n}\sum_{i=1}^nZ_{ij}^2\right)^\frac{1}{2} > \frac{1}{\sqrt{2}}\ \text{and}\ \mathbb{Z}< \frac{1}{2\sqrt{2}}\right)\\
 = &1 - P\left(\max_{1\leq j\leq J}\left(\frac{1}{n}\sum_{i=1}^nZ_{ij}^2\right)^\frac{1}{2} \leq \frac{1}{\sqrt{2}}\ \text{or}\ \mathbb{Z}\geq \frac{1}{2\sqrt{2}}\right)\\
\geq &1- P\left(\max_{1\leq j\leq J}\left(\frac{1}{n}\sum_{i=1}^nZ_{ij}^2\right)^\frac{1}{2} \leq \frac{1}{\sqrt{2}}\right) - P\left(\mathbb{Z}\geq \frac{1}{2\sqrt{2}}\right)\\
\geq &1-(Jn)^{-2}-n^{-3}.
\end{align*}
This implies that
\begin{align*}
\lambda &= \sqrt{\frac{\log(J)}{n}}\Phi^{-1}\left(1-\frac{1}{2\sqrt{2\pi}}\frac{1}{\sqrt{2\log(J)}}\frac{1}{J}\right)\max_{1\leq j\leq J}(\frac{1}{n}\sum_{i=1}^n \hat{Z}_{ij}^2)^{1/2}\\
&\geq c'' \sqrt{\frac{\log^2(J)}{n}}
\end{align*}
with probability at least $1-(Jn)^{-2} - n^{-3}$ for $n\geq n_1$ and some constants $c''>0$.

Finally, to bound $\alpha_n$, we once again appeal to Talagrand's inequality. By assumption, $M_J^2 = o(n)$, then there exists $n_2\in\mathbf{N}$ such that for all $n\geq n_2$,
\begin{align*}
c'' \sqrt{\frac{\log^2(J)}{n}} \geq & \frac{K(M_J\log(J)+ \sqrt{n\log(J)})}{n}\\
+ & \sqrt{2\left(\frac{2M_J}{n}\frac{K(M_J\log(J) + \sqrt{n\log(J)})}{n}+ \frac{C}{n}\right)x} + \frac{2M_J}{n}x\\
\geq & E\left[\mathbb{Z}\right] + \sqrt{2\nu_nx} + \frac{Ux}{3}
\end{align*}
for $J = n^p$ and $x = 3\log(n)$. Then for $n\geq n^*:=\max\{n_1,n_2\}$,
\begin{align*}
P(\mathbb{Z}\geq\lambda)&\leq P\left(\mathbb{Z}\geq c'' \sqrt{\frac{\log^2(J)}{n}}\right) + (Jn)^{-2} + n^{-3}\\
&\leq P\left(\mathbb{Z}\geq E\left[\mathbb{Z}\right] + \sqrt{2\nu_n\log(n^3)} + \frac{U\log(n^3)}{3}\right) + (Jn)^{-2} + n^{-3}\\
&\leq (Jn)^{-2} + 2n^{-3}.
\end{align*}
Recall that $\mathbb{Z} = \max_{1\leq j\leq J}|n^{-1}\sum_{i=1}^n\phi_j(X_i)-E\left[\phi_j(X_i)\right]| = \max_{1\leq j\leq J}|\hat\theta_j-\theta_j|$, so the above result suggests that we can take $\alpha_n = (Jn)^{-2} + 2n^{-3}$. Note there are many other permissible choices of $\alpha_n$, and the proof can be modified accordingly if different $\alpha_n$'s are needed. \qed

\subsection{Proof of Theorem \protect\ref{thm:rates}}
To establish the result for the post processed $\hat{f}^*$ in \ref{p-alg}, we first establish the result for $\tilde{f}_J$ in (\ref{series:post}) and then use the results in \cite{Gajek86} to conclude. We keep $J$ explicit throughout the proof and we will substitute $J = n^p$ at the end of the proof. Moreover, in order to bound the size of selected set of indices $T$, for the convenience of the notation, we penalize using $2\lambda$. Note that this is without loss of generality as we can multiply the original $\lambda$ in Theorem \ref{thm:ta} by $1/2$, and the results in Theorem $\ref{thm:ta}$ still hold (just with different constants and $n^*$ in the proof).

The proof proceeds in six steps. In the first step, we decompose the MISE into the ``variance'' and ``bias'' components. We bound the cardinality of the set of selected indices in the second step. Next, we bound ``variance'' and ``bias'' separately by expressions of $\lambda$ and $\alpha_n$ in the third and fourth steps respectively. In step 5 we establish that the $\lambda$ and $\alpha_n$ we are using are the ``correct'' ones and we conclude in step 6.




\emph{Step 1: Decompose MISE.} Let $f_J(\cdot) := \sum_{j=1}^J \theta_j\phi_j(\cdot)$ be the infeasible estimator for $f$, where $\theta_j$'s are the true but unknown Fourier coefficients. Then we have
\begin{align*}
&E_f\left[\|f-\tilde{f}_J\|^2_{L^2}\right]\\
=&E_f\left[\|f-f_J\|^2_{L^2}\right] + E_f\left[\|\tilde{f}_J-f_J\|^2_{L^2}\right] + 2E_f\left[\|(f-f_J)(\tilde{f}_J-f_J)\|_{L^2}\right]\\
=& E_f\left[\|f-f_J\|^2_{L^2}\right] + E_f\left[\|\tilde{f}_J-f_J\|^2_{L^2}\right]\\
 = & \sum_{j = J+1}^\infty \theta_j^2 + E\left[\sum_{j\in T}(\hat\theta_j - \theta_j)^2 + \sum_{j\in T^c}\theta_j^2 \right]\\
 = & E\left[\sum_{j\in T}(\hat\theta_j - \theta_j)^2\right] +E\left[\sum_{j\in T^c}\theta_j^2 + \sum_{j=J+1}^\infty\theta_j^2 \right]
\end{align*}
where the second and third equalities follow from orthonormality. Note that the last line corresponds to the variance-bias trade-off; however, the randomness of the set $T$ of the selected indices no longer allows the interchange of expectation and summation.

\emph{Step 2: Cardinality of Selected Indices.} Recall the definition of the selected indices $T$:
\[
T = \{j\in\{1,\cdots,J\}: \hat\theta_j \geq 2\lambda \}
\]
where $\lambda \geq (1-\alpha_n)$-quantile of $\|\hat\theta^J-\theta^J\|_\infty$. Then for $j\in T$, with probability at least $1-\alpha_n$,
\[
2\lambda \leq |\hat\theta_j| \leq |\hat\theta_j - \theta_j| + |\theta_j| \leq \lambda + Aj^{-k}
\]
where the last inequality holds by the definition of approximate sparsity set. This implies that for $j\in T$, with probability at least $1-\alpha_n$, 
\[
\lambda \leq Aj^{-k} \iff j \leq A^\frac{1}{k}\lambda^{-\frac{1}{k}}.
\]
That is, $T \subseteq \{1\leq j\leq J: j\leq A^{1/k}\lambda^{-1/k}\}$ with probability at least $1-\alpha_n$. This result establishes that with probability at least $1-\alpha_n$, the cardinality of the set of selected indices $T$ satisfies
\[
|T| \leq A^\frac{1}{k}\lambda^{-\frac{1}{k}}.
\]

\emph{Step 3: Bound on Bias.} To control the bias term $E\left[\sum_{j\in T^c}\theta_j^2 + \sum_{j=J+1}^\infty\theta_j^2 \right]$, we need to control the random set $T^c$, the set of non-selected indices. By triangle inequality, we have
\[
|\theta_j|\leq |\hat\theta_j| + |\hat\theta_j - \theta_j|
\]
Note that since $\lambda \geq (1-\alpha_n)$-quantile of $\|\hat\theta^J-\theta^J\|_\infty$, we have $|\hat\theta_j - \theta_j|\leq\lambda$ with probability at least $1-\alpha_n$. Moreover,
by definition, for $j\in T^c$, $|\hat\theta_j|< 2\lambda$. Combining above, on $T^c$,
\[
|\theta_j|\leq 3\lambda
\]
with probability at least $1-\alpha_n$. This result will help us control the bias on the random set $T^c$. Let $E_n$ denote the event that
\[
E_n := \{\lambda\geq \max_{j\leq J}|\hat\theta_j - \theta_j|\}
\]
and note that the probability of this event $E_n$ happening is at least $1-\alpha_n$, and on this event, $|\theta_j|\leq 3\lambda$ for $j\in T^c$. We will show that $\alpha_n$ can go to zero sufficiently fast for us to establish minimax rate. Then we can establish the following: for some constants $C_1,C_2,C_3$,
\begin{align*}
&E\left[\sum_{j\in T^c}\theta_j^2 + \sum_{j=J+1}^\infty\theta_j^2 \right]\\
&\leq E\left[\sum_{j\in T^c\cap j\leq m}\theta_j^2 + \sum_{j=m+1}^\infty\theta_j^2 \right]\\
& = E\left[\left(\sum_{j\in T^c\cap j\leq m}\theta_j^2 + \sum_{j=m+1}^\infty\theta_j^2\right)\cdot \mathbf{1}\{E_n\} \right] + E\left[\left(\sum_{j\in T^c\cap j\leq m}\theta_j^2 + \sum_{j=m+1}^\infty\theta_j^2\right)\cdot \mathbf{1}\{E_n^c\}\right ]\\
&\leq E\left[9m\lambda^2 + C_1m^{-2k+1}\right] + C_2\cdot P(E_n^c)
\end{align*}
where the first inequality holds for every $m\leq J$; the second inequality holds by that on the event $E_n$, $|\theta_j|\leq 3\lambda$, by H\"older's inequality, and by that the true density is bounded. Since the above expression holds for all $m\leq J$, we have the following
\begin{align*}
&E\left[\sum_{j\in T^c}\theta_j^2 + \sum_{j=J+1}^\infty\theta_j^2 \right]\\
&\leq E\left[\min_m 9m\lambda^2 + C_1m^{-2k+1}\right] + C_2\cdot P(E_n^c)\\
&\leq C_3E\left[\lambda^{\frac{2k-1}{k}}\right] + \alpha_n C_2
\end{align*}
where the last inequality holds by solving the minimization problem over $m$ and by that the probability of event $E_n^c$ is at most $\alpha_n$. We will specify $\alpha_n$ and bound $E\left[\lambda^{\frac{2k-1}{k}}\right]$ explicitly.

\emph{Step 4: Bound on Variance.} In this section, we establish bounds on $E\left[\sum_{j\in T}(\hat\theta_j-\theta_j)^2\right]$. Recall $E_n$ is the event that $E_n= \{\lambda\geq\max_{j\in J}|\hat\theta_j - \theta_j|\}$, and in step 2 we have shown that $|T|\leq A^{1/k}\lambda^{-1/k}$ on this event. Then
\begin{align*}
 &E\left[\sum_{j\in T}(\hat\theta_j-\theta_j)^2\right]\\
=&E\left[\sum_{j\in T}(\hat\theta_j-\theta_j)^2\mathbf{1}\{E_n\}\right] + E\left[\sum_{j\in T}(\hat\theta_j-\theta_j)^2\mathbf{1}\{E_n^c\}\right]\\
\leq & E\left[|T| \max_{j\in T} (\hat\theta_j-\theta_j)^2\mathbf{1}\{E_n\}\right] + \sum_{j=1}^JE\left[(\hat\theta_j - \theta_j)^2\mathbf{1}\{E_n^c\}\right]
\end{align*}
where the inequality holds since (i) $\sum_{j\in T}(\hat\theta_j-\theta_j)^2\mathbf{1}\{E_n\} \leq |T| \max_{j\in T} (\hat\theta_j-\theta_j)^2\mathbf{1}\{E_n\}$ over the sample space; (ii) we can bound the sum over random $T\subseteq \{1,\cdots, J\}$ from above with the deterministic sum over $1\leq j\leq J$, and then interchange the expectation and summation.

First, we bound $E\left[|T| \max_{j\in T} (\hat\theta_j-\theta_j)^2\mathbf{1}\{E_n\}\right]$ with the help of set $E_n$:
\begin{align*}
&E\left[|T| \max_{j\in T} (\hat\theta_j-\theta_j)^2\mathbf{1}\{E_n\}\right]\\
\leq & A^\frac{1}{k}E\left[\lambda^{-\frac{1}{k}} \max_{j\in T} (\hat\theta_j-\theta_j)^2\mathbf{1}\{E_n\}\right]\\
\leq & A^\frac{1}{k}E\left[\lambda^{-\frac{1}{k}}\lambda^2\right]\\
= & A^\frac{1}{k}E\left[\lambda^{\frac{2k-1}{k}}\right]
\end{align*}
where the first inequality holds since $|T| \leq A^{1/k}\lambda^{-1/k}$ on $E_n$ and the second inequality holds since on $E_n$, $\max_{j\in J}|\hat\theta_j - \theta_j|\leq\lambda$.

Second, we bound $\sum_{j=1}^JE\left[(\hat\theta_j - \theta_j)^2\mathbf{1}\{E_n^c\}\right]$. By Cauchy-Schwarz,
\[
\sum_{j=1}^JE\left[(\hat\theta_j - \theta_j)^2\mathbf{1}\{E_n^c\}\right]
\leq \sum_{j=1}^J\left(E\left[(\hat\theta_j - \theta_j)^4\right]\right)^\frac{1}{2}(P(E_n^c))^\frac{1}{2}.
\]
Moreover, for each $j$,
\begin{align*}
E\left[(\hat\theta_j - \theta_j)^4\right] &= E\left[\left(\frac{1}{n}\sum_{i=1}^n\phi_j(X_i) -E\left[\phi_j(X_i)\right]\right)^4\right]\\
& = n^{-4}E\left[\left(\sum_{i=1}^n\phi_j(X_i) -E\left[\phi_j(X_i)\right]\right)^4\right]\\
& \leq c_1 n^{-4} E\left[\left(\sum_{i=1}^n\left(\phi_j(X_i) -E\left[\phi_j(X_i)\right]\right)^2\right)^2\right]\\
& = c_1 n^{-2} E\left[\left(\frac{1}{n}\sum_{i=1}^n\left(\phi_j(X_i) -E\left[\phi_j(X_i)\right]\right)^2\right)^2\right]\\
& \leq c_1 n^{-2} E\left[\frac{1}{n}\sum_{i=1}^n \left(\phi_j(X_i)-E\left[\phi_j(X_i)\right]\right)^4 \right]\\
& = c_1n^{-2} E\left[(\phi_j(X_i)-E\left[\phi_j(X_i)\right])^4 \right]
\end{align*}
where the first inequality holds by Marcinkiewicz-Zygmund inequality for some constant $c_1>0$, the second inequality holds by convexity of the function $x\mapsto x^2$ and Jensen's inequality, and the last equality holds by i.i.d assumption. This derivation suggests that we need to bound the fourth central moment of $\phi_j(X_i)$. If the basis is uniformly bounded, the result is trivial. On the other hand, if the basis grows, i.e. $\max_{1\leq j\leq J}\|\phi_j(\cdot)\|_\infty \leq M_J$, we can bound $E\left[\phi_j^4(X_i)\right]$ explicitly. Note that for some constant $C>0$,
\begin{align*}
E\left[\phi_j^4(X_i)\right] \leq M_J^2E\left[\phi_j^2(X_i)\right] = M_J^2 \int \phi_j^2(x)f(x)dx \leq CM_J^2
\end{align*}
where the last inequality hold by orthonormality and by that $f$ is bounded. This gives us
\[
E\left[(\hat\theta_j-\theta_j)^4\right]\leq C' M_J^2/n^2
\]
for some constant $C'>0$. Therefore, for some constant $c_2>0$,
\begin{align*}
\sum_{j=1}^JE\left[(\hat\theta_j - \theta_j)^2\mathbf{1}\{E_n^c\}\right]
\leq \sum_{j=1}^J\left(E\left[(\hat\theta_j - \theta_j)^4\right]\right)^\frac{1}{2}(P(E_n^c))^\frac{1}{2} \leq c_2 JM_J/n\alpha_n^{1/2}
\end{align*}
where the second inequality holds by that $P(E_n^c)\leq\alpha_n$. We need $JM_J/n\alpha_n^{1/2}$ to go to zero sufficiently fast (at least as fast as the minimax rate), which is assumed and can be verified for a given orthonormal basis. Combining above, we have an upper bound on variance
\[
E\left[\sum_{j\in T}(\hat\theta_j-\theta_j)^2\right] \leq c_1E\left[\lambda^{\frac{2k-1}{k}}\right] + c_2JM_J/n\alpha_n^{1/2}.
\]

\emph{Step 5: The ``Right'' Penalization Parameter.} We show the $\lambda$ proposed in (\ref{ta:lambda}) will give us the ``correct'' minimax rate. Recall that $\lambda$ is defined as follows
\[
\lambda = \sqrt{\frac{\log(J)}{n}}\Phi^{-1}\left(1-\frac{1}{2\sqrt{2\pi}}\frac{1}{\sqrt{2\log(J)}}\frac{1}{J}\right)\max_{1\leq j\leq J}\left(\frac{1}{n}\sum_{i=1}^n \hat{Z}_{ij}^2\right)^\frac{1}{2}
\]
where $\hat{Z}_{ij}:= \phi_j(X_i) - n^{-1}\sum_{k=1}^n\phi_j(X_k)$. Then
\begin{align*}
&E\left[\lambda^\frac{2k-1}{k}\right]\\
&= E\left[\left(\sqrt{\frac{\log(n)}{n}}\Phi^{-1}\left(1-\frac{1}{2\sqrt{2\pi}}\frac{1}{\sqrt{2\log(J)}}\frac{1}{J}\right)\max_{1\leq j\leq J}(\frac{1}{n}\sum_{i=1}^n \hat{Z}_{ij}^2)^\frac{1}{2}\right)^\frac{2k-1}{k}\right]\\
&\leq \left(\frac{\log(n)\log(2\sqrt{2\pi}\sqrt{2\log(J)}J)}{n}\right)^\frac{2k-1}{2k}E\left[\left(\max_{1\leq j\leq J}\left(\frac{1}{n}\sum_{i=1}^n \hat{Z}_{ij}^2\right)^\frac{1}{2}\right)^\frac{2k-1}{k}\right]\\
&\leq \left(\frac{\log(n)\log(2\sqrt{2\pi}\sqrt{2\log(J)}J)}{n}\right)^\frac{2k-1}{2k}\left(E\left[\max_{1\leq j\leq J}\frac{1}{n}\sum_{i=1}^n \hat{Z}_{ij}^2\right]\right)^\frac{2k-1}{2k}\\
&\leq C'\left(\frac{\log(n)\log(J)}{n}\right)^{\frac{2k-1}{2k}}\left(E\left[\max_{1\leq j\leq J}\left|\frac{1}{n}\sum_{i=1}^n \phi_j^2(X_i)-E\left[\phi_j^2(X_i)\right]\right|\right] + C''\right)^\frac{2k-1}{k}\\
&\leq C'\left(\frac{\log(n)\log(J)}{n}\right)^{\frac{2k-1}{2k}}\left(\frac{M_J^2\log(J) + \sqrt{nM_J^2\log(J)}}{n} + C''\right)^\frac{2k-1}{k}\\
&\leq C\left(\frac{\log(n)\log(J)}{n}\right)^{\frac{2k-1}{2k}}
\end{align*}
where the first inequality holds by the property that $\Phi^{-1}(1-x)\leq\sqrt{2\log(1/x)}$ (see Lemma \ref{lm:gtail}), the second inequality holds by Jensen's inequality, and the third inequality holds by the definition of $\hat{Z}_{ij}$ and by the fact that the density is bounded so that by orthonormality $E\left[\phi_j^2(X_i)\right]\leq C''$ for some constant $C''>0$, the fourth inequality holds by maximal inequality (see Lemma \ref{lm:maximal} in the appendix), and the last inequality holds by the assumption that $M_J^2\leq n/\log(n)$ and $J=n^p$ for some constant $C>0$.

\emph{Step 6: Conclusion.} By Theorem \ref{thm:ta}, under Condition \ref{regularity}, there exists $n^*\in\mathbf{N}$ such that for all $n\geq n^*$,
\[
\lambda\geq (1-\alpha_n)-\text{quantile of}\ \max_{1\leq j\leq J}|\hat\theta_j - \theta_j|
\]
with $\alpha_n = (Jn)^{-2} + 2n^{-3}$. Then combining results from step 1 to step 5, by the assumptions on $J$ and $M_J$, we have for all $n\geq n^* = \max\{n_1,n_2\}$,
\begin{align*}
&E_f\left[\|f-\tilde{f}_J\|_{L_2}^2\right]\\ 
&\leq E\left[\sum_{j\in T}(\hat\theta_j - \theta_j)^2\right] +E\left[\sum_{j\in T^c}\theta_j^2 + \sum_{j=J+1}^\infty\theta_j^2 \right]\\
&\leq c_1E\left[\lambda^{\frac{2k-1}{k}}\right] + c_2\frac{JM_J}{n}\alpha_n^{\frac{1}{2}} + C_3E\left[\lambda^{\frac{2k-1}{k}}\right] + \alpha_nC_2\\
&\leq \tilde{C}\left(\frac{\log(n)\log(J)}{n}\right)^{\frac{2k-1}{2k}} + c_2\frac{JM_J}{n}\left((Jn)^{-2} + 2n^{-3}\right)^{\frac{1}{2}} + C_2\left((Jn)^{-2} + 2n^{-3}\right)\\
&= O\left(\frac{\log^2(n)}{n}\right)^{\frac{2k-1}{2k}}
\end{align*}
where the last inequality holds by that $\alpha_n = (Jn)^{-2} + 2n^{-3}< n^{-(2k-1)/(2k)}$ and by the assumption that $(JM_J/n)\alpha_n^{1/2}\leq n^{-(2k-1)/(2k)}$. By \cite{Gajek86}, the post-processed $\hat{f}^*$ has smaller MISE, and note that our proof does not depend on a specific $f\in \tilde\Theta_k(\Phi,A,C)$, which proves the desired result
\[
\sup_{f\in\tilde\Theta_k(\Phi,A,C)} E_f\left[\|f-\hat{f}^*\|_{L_2}^2\right]\leq \sup_{f\in\tilde\Theta_k(\Phi,A,C)} E_f\left[\|f-\tilde{f}_J\|_{L_2}^2\right] = O\left(\left(\frac{\log^2(n)}{n}\right)^{\frac{2k-1}{2k}}\right).
\]
\qed

\section{Additional Technical Results}
\begin{lemma}\label{lm:inclusion}
Let $A>0$ and $k>1/2$ be some constants. Let the constant $C$ in the definition of $\Theta_k(\Phi,A,C)$ and $\mathcal{A}_k(\Phi,A,C)$ be such that $C\geq A^2/(2k-1)$. Then
\begin{equation}
\mathcal{E}_k(\Phi,A)\subseteq\Theta_k(\Phi,A,C)\subseteq\mathcal{A}_k(\Phi,A,C)
\end{equation}
\end{lemma}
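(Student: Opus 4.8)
The plan is to establish the two inclusions separately, each reducing to elementary estimates on the Fourier coefficients. For the first inclusion $\mathcal{E}_k(\Phi,A)\subseteq\Theta_k(\Phi,A,C)$, fix $f=\sum_j\theta_j\phi_j\in\mathcal{E}_k(\Phi,A)$, so that $|\theta_j|\leq Aj^{-k}$ for every $j\geq 1$, and verify the two defining conditions of $\Theta_k(\Phi,A,C)$. The tail-sum condition is the routine part: since $k>1/2$, comparing the sum to an integral gives $\sum_{j=J+1}^\infty\theta_j^2\leq A^2\sum_{j=J+1}^\infty j^{-2k}\leq A^2\int_J^\infty t^{-2k}\,dt=\tfrac{A^2}{2k-1}J^{-2k+1}\leq CJ^{-2k+1}$, where the last step uses the hypothesis $C\geq A^2/(2k-1)$. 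The reordering condition requires a short counting (pigeonhole) argument: for any $j$ with $\theta_{(j)}\neq 0$, there are at least $j$ indices $i$ with $|\theta_i|\geq|\theta_{(j)}|$; each such $i$ satisfies $|\theta_{(j)}|\leq|\theta_i|\leq Ai^{-k}$, hence $i\leq (A/|\theta_{(j)}|)^{1/k}$, so all of these $j$ indices lie among the first $(A/|\theta_{(j)}|)^{1/k}$ positive integers, which forces $j\leq (A/|\theta_{(j)}|)^{1/k}$ and therefore $|\theta_{(j)}|\leq Aj^{-k}$; when $\theta_{(j)}=0$ the bound is trivial.

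For the second inclusion $\Theta_k(\Phi,A,C)\subseteq\mathcal{A}_k(\Phi,A,C)$, fix $f=\sum_j\theta_j\phi_j\in\Theta_k(\Phi,A,C)$. The tail-sum condition $\sum_{j=J+1}^\infty\theta_j^2\leq CJ^{-2k+1}$ for all $J\geq 1$ appears verbatim in the definitions of both classes, so nothing is needed there. It remains only to check $|\theta_1|\leq A$, and this is immediate from the reordering condition evaluated at $j=1$: $|\theta_1|\leq\max_i|\theta_i|=|\theta_{(1)}|\leq A\cdot 1^{-k}=A$. Combining the two inclusions gives the chain $\mathcal{E}_k(\Phi,A)\subseteq\Theta_k(\Phi,A,C)\subseteq\mathcal{A}_k(\Phi,A,C)$.

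I do not anticipate any genuine obstacle here; the proof is almost entirely bookkeeping with the integral test and the choice $C\geq A^2/(2k-1)$. The one step that warrants care — and the closest thing to a ``hard part'' — is the counting argument showing that the pointwise bound $|\theta_j|\leq Aj^{-k}$ for all $j$ is inherited by the non-increasing rearrangement $\{\theta_{(j)}\}$, but even that is a one-line observation. The remark made earlier in the text (that the reordering requirement is equivalent to asking, for every $J$, that the rearrangement of the first $J$ coefficients obey $|\theta_{(j)}|\leq Aj^{-k}$) makes this verification unambiguous.
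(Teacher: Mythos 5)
Your proof is correct and follows essentially the same route as the paper's: the integral comparison $\sum_{j=J+1}^\infty\theta_j^2\leq A^2\int_J^\infty t^{-2k}\,dt$ for the first inclusion, and the observation that the tail-sum conditions coincide (plus $|\theta_1|\leq|\theta_{(1)}|\leq A$) for the second. The only difference is that you spell out, via the pigeonhole counting argument, why the pointwise bound $|\theta_j|\leq Aj^{-k}$ is inherited by the non-increasing rearrangement — a step the paper's proof treats as immediate — which is a worthwhile addition but not a departure in method.
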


\begin{proof}[Proof of Lemma \ref{lm:inclusion}]
First, note that by definition, $\mathcal{E}_k(\Phi,A)$ is a special case of $\Theta_k(\Phi,A,C)$ without reordering. In particular, in the definition of $\mathcal{E}_k(\Phi,A)$, since $|\theta_j|<Aj^{-k}$, then
\[
\sum_{j= J+1}^\infty\theta_j^2 \leq A^2\int_J^\infty t^{-2k}dt = \frac{A^2}{2k-1}J^{-2k+1}
\]
This establishes $\mathcal{E}_k(\Phi,A)\subseteq \Theta_k(\Phi,A,C)$.

Moreover, the restrictions on the tail sum $\sum_{j= J+1}^\infty\theta_j^2$ are identical in $\Theta_k(\Phi,A,C)$ and $\mathcal{A}_k(\Phi,A,C)$. Additional restrictions on individual $\theta_j$ makes $\Theta_k(\Phi,A,C)$ a subset of $\mathcal{A}_k(\Phi,A,C)$. \end{proof}

\begin{lemma}\label{lm:gtail}
Let $\Phi$ denote the CDF of the standard normal random variable, then for $x\geq 0$,
\begin{align*}
\frac{1}{\sqrt{2\pi}}\frac{x}{x^2+1}\exp\left(-\frac{x^2}{2}\right)< 1- \Phi(x) < \frac{1}{\sqrt{2\pi}}\frac{1}{x}\exp\left(-\frac{x^2}{2}\right).
\end{align*}
\end{lemma}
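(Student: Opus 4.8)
The plan is to establish the two bounds separately: the upper bound by an elementary integral comparison, and the lower bound by a monotonicity argument.

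For the upper inequality, fix $x>0$ and write $1-\Phi(x)=\frac{1}{\sqrt{2\pi}}\int_x^\infty e^{-t^2/2}\,dt$. On $(x,\infty)$ one has $t/x>1$, hence $e^{-t^2/2}<\frac{t}{x}e^{-t^2/2}$; integrating and using $\int_x^\infty t\,e^{-t^2/2}\,dt=e^{-x^2/2}$ gives $1-\Phi(x)<\frac{1}{\sqrt{2\pi}}\frac{1}{x}e^{-x^2/2}$, as claimed (the inequality being trivial at $x=0$, where the right-hand side is $+\infty$).

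For the lower inequality I would introduce the auxiliary function
\[
h(x):=\bigl(1-\Phi(x)\bigr)-\frac{1}{\sqrt{2\pi}}\,\frac{x}{x^2+1}\,e^{-x^2/2},\qquad x\ge 0.
\]
Since $h(0)=\tfrac12>0$ and $h(x)\to 0$ as $x\to\infty$ (both summands vanish, the second by the upper bound just shown), it suffices to check that $h$ is strictly decreasing: a strictly decreasing function tending to $0$ at $+\infty$ is strictly positive everywhere. Differentiating, using $\frac{d}{dx}\frac{x}{x^2+1}=\frac{1-x^2}{(x^2+1)^2}$, a short computation shows that
\[
h'(x)=-\frac{1}{\sqrt{2\pi}}\,e^{-x^2/2}\cdot\frac{(x^2+1)^2+\bigl(1-2x^2-x^4\bigr)}{(x^2+1)^2},
\]
and since $(x^2+1)^2=x^4+2x^2+1$ the numerator collapses to the constant $2$, so $h'(x)=-\sqrt{2/\pi}\,e^{-x^2/2}/(x^2+1)^2<0$ for all $x\ge 0$. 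This yields $h>0$ on $[0,\infty)$ and hence the left-hand inequality.

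The only point requiring any care is the algebraic simplification of the numerator of $h'$ to a constant; once that is verified the argument is immediate. An alternative for the lower bound is the iterated integration by parts $\int_x^\infty e^{-t^2/2}\,dt=\frac{e^{-x^2/2}}{x}-\int_x^\infty t^{-2}e^{-t^2/2}\,dt$, but this route produces the weaker factor $\frac{x^2-1}{x^3}$ in place of $\frac{x}{x^2+1}$, so I would favour the monotonicity argument above.
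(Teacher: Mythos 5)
Your proof is correct and follows essentially the same route as the paper: the upper bound via the comparison $e^{-t^2/2}<\frac{t}{x}e^{-t^2/2}$ on $(x,\infty)$, and the lower bound by showing the auxiliary function $h(x)=(1-\Phi(x))-\frac{1}{\sqrt{2\pi}}\frac{x}{x^2+1}e^{-x^2/2}$ is decreasing and tends to $0$. Your explicit verification that the numerator of $h'$ collapses to the constant $2$ is a detail the paper omits, and it checks out.
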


\begin{proof}[Proof of Lemma \ref{lm:gtail}]
To show the upper bound, note that for $x\geq 0$
\begin{align*}
1-\Phi(x) &= \int_x^\infty\phi(s)ds\\
& = \frac{1}{\sqrt{2\pi}}\int_x^\infty \exp\left(-\frac{s^2}{2}\right) ds\\
& < \frac{1}{\sqrt{2\pi}}\int_x^\infty\frac{s}{x}\exp\left(-\frac{s^2}{2}\right) ds\\
& = \frac{1}{\sqrt{2\pi}}\frac{1}{x}\exp\left(-\frac{x^2}{2}\right).
\end{align*}
To show the lower bound, let
\[
h(x) := 1-\Phi(x) - \frac{1}{\sqrt{2\pi}}\frac{x}{x^2+1}\exp\left(-\frac{x^2}{2}\right).
\]
Since $h(0) >0$, $h'(x)<0$ for all $x\geq 0$, and $h(x)\to 0$ as $x\to\infty$, we must have $h(x)>0$ for all $x\geq 0$. This gives us the lower bound.
\end{proof}

\begin{lemma}\label{lm:maximal}
Let $\{X_i\}_{i=1}^n\sim X$ be an i.i.d sample with $X\in [0,1]\subseteq \mathbf{R}$. Suppose the density $f$ of $X$ is bounded above by some constant $\tilde{C}$ and that the orthonormal basis $\{\phi_j\}_{j=1}^\infty$ of $L^2([0,1],\mu)$ is such that $\max_{1\leq j\leq J}\|\phi_j(\cdot)\|_\infty \leq M_J$ for some $M_J$ that potentially grows with $J$ for all $J$. Then for some constants $K_1$ and $K_2$,
\begin{align*}
E\left[\max_{j\leq J} \left|\sum_{i=1}^n\phi_j(X_i) - E\left[\phi_j(X_i)\right] \right|\right]\leq K_1\left(M_J\log(J) + \sqrt{n}\sqrt{\log(J)} \right)
\end{align*}
and
\begin{align*}
E\left[\max_{j\leq J} \left|\sum_{i=1}^n\phi_j^2(X_i) - E\left[\phi_j^2(X_i)\right] \right|\right]\leq K_2\left(M_J^2\log(J) + \sqrt{nM_J^2}\sqrt{\log(J)} \right).
\end{align*}

\end{lemma}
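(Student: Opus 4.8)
The plan is to derive a single Bernstein-type deviation bound for each coordinate $j$ and then aggregate the $J$ coordinates through a generic maximal inequality for families of random variables with mixed (sub-Gaussian plus sub-exponential) tails. First I would fix $j$ and set $W_j := \sum_{i=1}^n\bigl(\phi_j(X_i) - E[\phi_j(X_i)]\bigr)$. The summands are i.i.d., centered, and bounded in absolute value by $2M_J$ since $\|\phi_j\|_\infty\le M_J$; moreover, by orthonormality and boundedness of the density, $\mathrm{Var}\bigl(\phi_j(X_i)\bigr)\le E[\phi_j^2(X_i)] = \int_0^1\phi_j^2(x)f(x)\,d\mu(x)\le\tilde{C}\int_0^1\phi_j^2\,d\mu = \tilde{C}$. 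Bernstein's inequality (in the form used in \cite{Boucheron13}) then gives, for all $t>0$,
\[
P\bigl(|W_j|>t\bigr)\le 2\exp\!\left(-\frac{t^2/2}{n\tilde{C} + (2M_J/3)\,t}\right).
\]

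Next I would invoke a standard maximal inequality: if $W_1,\dots,W_J$ each satisfy $P(|W_j|>t)\le 2\exp\bigl(-t^2/(2(\beta+\gamma t))\bigr)$, then $E\bigl[\max_{j\le J}|W_j|\bigr]\le K\bigl(\sqrt{\beta\log(1+J)}+\gamma\log(1+J)\bigr)$ for an absolute constant $K$ (this is essentially the van der Vaart--Wellner maximal inequality; see also \cite{Boucheron13}). To keep the argument self-contained I would instead bound $E[\max_j|W_j|]\le a+\int_a^\infty 2J\exp\bigl(-t^2/(2(\beta+\gamma t))\bigr)\,dt$ for any $a>0$, split the exponent according to whether $t\le\beta/\gamma$ (where the bound is sub-Gaussian) or $t>\beta/\gamma$ (where it is sub-exponential), and choose $a$ of order $\sqrt{\beta\log J}+\gamma\log J$ so that the remaining integral is $O(1)$. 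With $\beta=n\tilde{C}$ and $\gamma=2M_J/3$ this yields $E[\max_{j\le J}|W_j|]\le K_1\bigl(M_J\log J+\sqrt{n}\,\sqrt{\log J}\bigr)$ after absorbing constants and enlarging $K_1$ to cover the difference between $\log(1+J)$ and $\log J$ for small $J$, which is the first claimed bound.

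For the second bound I would rerun the argument verbatim with $\phi_j^2$ in place of $\phi_j$. Now the centered summands $\phi_j^2(X_i)-E[\phi_j^2(X_i)]$ are bounded by $M_J^2$ (since $0\le\phi_j^2\le M_J^2$), and $\mathrm{Var}\bigl(\phi_j^2(X_i)\bigr)\le E[\phi_j^4(X_i)]\le M_J^2\,E[\phi_j^2(X_i)]\le \tilde{C}M_J^2$. Bernstein's inequality then applies with $\beta=\tilde{C}nM_J^2$ and $\gamma=M_J^2/3$, and the same maximal inequality gives the bound $K_2\bigl(M_J^2\log J+\sqrt{nM_J^2}\,\sqrt{\log J}\bigr)$ for $E\bigl[\max_{j\le J}\bigl|\sum_{i=1}^n(\phi_j^2(X_i)-E[\phi_j^2(X_i)])\bigr|\bigr]$.

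I do not expect a genuine obstacle here; the only step requiring care is the maximal inequality, where one must retain both regimes of the Bernstein tail so that the variance term contributes the $\sqrt{n\log J}$ piece and the boundedness term contributes the $M_J\log J$ piece, rather than letting one regime swallow the other. Everything else is routine: the i.i.d. and boundedness hypotheses make Bernstein directly applicable, and orthonormality together with $f\le\tilde{C}$ supplies the needed second- and fourth-moment controls.
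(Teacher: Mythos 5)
Your proposal is correct and follows essentially the same route as the paper: a coordinatewise Bernstein bound (with the variance controlled via orthonormality and $f\le\tilde{C}$, and the fourth moment via $E[\phi_j^4]\le M_J^2E[\phi_j^2]$) followed by the van der Vaart--Wellner maximal inequality, which is exactly Lemmas 2.2.9 and 2.2.10 of \cite{VW96} as invoked in the paper. Your additional remark on proving the maximal inequality self-containedly by integrating the two tail regimes is a fine substitute but not a different argument in substance.
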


\begin{proof}[Proof of Lemma \ref{lm:maximal}]
First, since $\max_{1\leq j\leq J}\|\phi_j(\cdot)\|_\infty \leq M_J$, we have 
\[
|\phi_j(X_i) - E\left[\phi_j(X_i)\right]|\leq 2M_J.
\]
Since the density is bounded by $\tilde{C}$,
\[
\max_{j\leq J} E\left[\phi_j^2(X)\right] = \max_{j\leq J} \int \phi_j^2(x)f(x) dx \leq \max_{j\leq J} \tilde{C}\int \phi_j^2(x) dx = \tilde{C}
\]
where the last equality holds by orthonormality. This implies
\[
Var\left(\sum_{i=1}^n\phi_j(X_i) - E\left[\phi_j(X_i)\right]\right) = \sum_{i=1}^n Var(\phi_j(X_i))\leq nE\left[\phi_j^2(X_i)\right] \leq n\tilde{C}.
\]
Then, by Bernstein's inequality (Lemma 2.2.9 in \cite{VW96}), 
\[
P\left(\left|\sum_{i=1}^n\phi_j(X_i) - E\left[\phi_j(X_i)\right] \right|>x\right)\leq 2\exp\left(-\frac{1}{2}\frac{x^2}{n\tilde{C} +\frac{2M_Jx}{3}}\right)
\]
and by maximal inequality (Lemma 2.2.10 in \cite{VW96}),
\begin{align*}
E\left[\max_{j\leq J} \left|\sum_{i=1}^n\phi_j(X_i) - E\left[\phi_j(X_i)\right] \right|\right]\leq K_1\left(M_J\log(J) + \sqrt{n}\sqrt{\log(J)} \right)
\end{align*}
for some fixed constant $K_1$.

The second part of the statement can be shown using similar arguments. By assumption, we have $\max_{1\leq j\leq J}\|\phi_j^2\|_\infty \leq M_J^2$ for some potentially growing $M_J$ for all $J$. Then
\begin{align*}
&Var\left(\sum_{i=1}^n\phi_j^2(X_i) -E\left[\phi_j(X_i)^2\right]\right) \\
&= nVar(\phi_j^2(X))\leq nE\left[\phi_j^4(X)\right]\leq nM_J^2E\left[\phi_j(X)^2\right]\leq n\tilde{C}M_J^2.
\end{align*}
Then by Bernstein's inequality 
\[
P\left(\left|\sum_{i=1}^n\phi_j^2(X_i) - E\left[\phi_j^2(X_i)\right]\right|>x \right)\leq 2\exp\left(-\frac{1}{2}\frac{x^2}{n\tilde{C}M_J^2+ \frac{2M_J^2x}{3}}\right)
\]
which holds for all $1\leq j\leq J$, and by Maximal inequality,
\[
E\left[\max_{j\leq J}\left|\sum_{i=1}^n\phi_j^2(X_i) - E\left[\phi_j^2(X_i)\right]\right|\right]\leq K_2\left(M_J^2\log(J) + \sqrt{nM_J^2}\sqrt{\log(J)}\right)
\]
for some fixed constant $K_2$.
\end{proof}

\end{appendix}

\section*{Acknowledgements}
The author would like to express his gratitude to Andres Santos and Denis Chetverikov for their generous time and extremely helpful discussions that have lead to substantial improvements to this project. The author would also like to thank Rosa Matzkin, Oscar H. Madrid-Padilla, Mingli Chen, Rodrigo Pinto, and participants at UCLA econometrics proseminars for their helpful suggestions.

\end{document}